    \definecolor{my_pink}{rgb}{1,0.07,0.65}
\definecolor{my_g}{rgb}{0.09,0.71,0.14}
\newcommand{\triangleq}{\stackrel{\triangle}{=}}
\newtheoremstyle{italic-title} 
{} 
{} 
{\normalfont} 
{} 
{\itshape} 
{:} 
{ } 
{} 
\newtheorem{theorem}{Theorem}
\newtheorem{corollary}{Corollary}
\begin{document}

\title{On Error Rate Approximations for FSO Systems with Weak Turbulence and Pointing Errors}

\author{Carmen Álvarez Roa, Yunus Can Gültekin,~\IEEEmembership{Member,~IEEE,} Kaiquan Wu, Cornelis Willem Korevaar, \\and Alex Alvarado,~\IEEEmembership{Senior Member,~IEEE}
\thanks{
This work is part of the project BIT-FREE with file number 20348 of the research programme Open Technology Programme which is (partly) financed by the Dutch Research Council (NWO).
This article was presented in part at the ICTON, Bari, Italy, June 2024. \emph{(Corresponding author: Carmen Álvarez Roa.)}}
\thanks{The authors are with the Information and Communication Theory Lab, Signal Processing Systems Group, Department of Electrical Engineering, Eindhoven University of Technology, Eindhoven 5600 MB, The Netherlands (e-mails: \{c.alvarez.roa, y.c.g.gultekin, k.wu, c.w.korevaar, a.alvarado\}@tue.nl).}
}




\maketitle

\begin{abstract}
Atmospheric attenuation, atmospheric turbulence, geometric spread, and pointing errors, degrade the performance of free-space optical transmission. 
In the weak turbulence regime, the probability density function describing the distribution of the channel fading coefficient that models these four effects is known in the literature. This function is an integral equation, which makes it difficult to find simple analytical expressions of important performance metrics such as the bit error rate (BER) and symbol error rate (SER). 
In this paper, we present simple and accurate approximations of the average BER and SER for pulse-amplitude modulation (PAM) in the weak turbulence regime for an intensity modulation and direct detection system. 
Our numerical results show that the proposed expressions exhibit excellent accuracy when compared against Monte Carlo simulations. 
To demonstrate the usefulness of the developed approximations, we perform two asymptotic analyses. First, we investigate the additional transmit power required to maintain the same SER when the spectral efficiency increases by $1$ bit/symbol. Second, we study the asymptotic behavior of our SER approximation for dense PAM constellations and high transmit power. 

\end{abstract}

\begin{IEEEkeywords}
 Free-space optical communications, atmospheric turbulence, pointing errors, channel modeling, symbol error rate, bit error rate.
\end{IEEEkeywords}

 \section{Introduction}

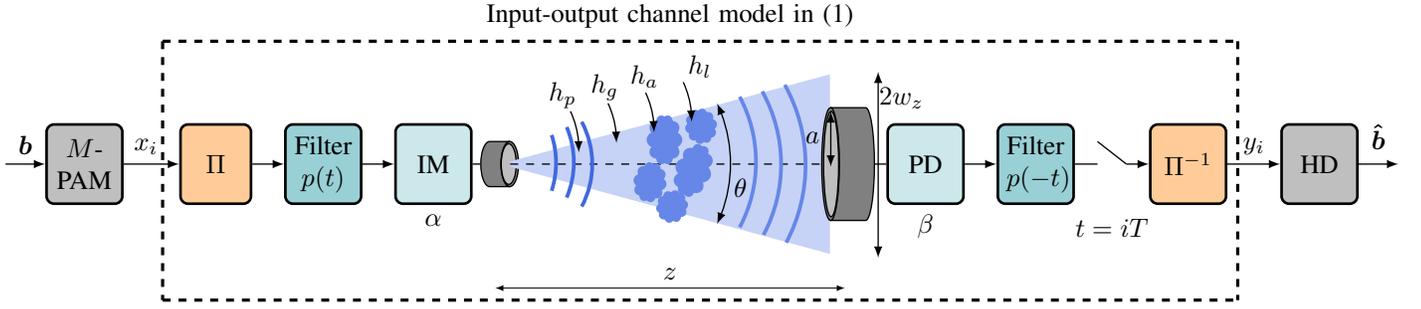
\begin{figure*}[h]
    \centering
   \usetikzlibrary {shapes.geometric}

\usetikzlibrary{circuits.ee.IEC}

\definecolor{ashgrey}{rgb}{0.75, 0.75, 0.75}
\definecolor{antiquebrass}{rgb}{0.98, 0.81, 0.69}
\definecolor{brilliantlavender}{rgb}{0.76, 0.53, 0.85}
\definecolor{RoyalBlue}{rgb}{0.25, 0.41, 0.88}
\definecolor{JungleGreen}{rgb}{0.0, 0.54, 0.6}

\tikzstyle{block} = [draw, line width = 1pt, fill=black!20, rectangle, minimum height=30pt, rounded corners=0.1cm, text width=2.2em,align=center]


\tikzstyle{block11} = [draw, line width = 1pt, fill=black!20, rectangle, minimum height=30pt, minimum width=15pt, rounded corners=0.1cm,text width=2em,align=center]
    
\tikzstyle{block2} = [draw, line width = 1pt, fill=black!20, rectangle, minimum height=30pt, minimum width=30pt, rounded corners=0.1cm, text width=5em,align=center]

\tikzstyle{Cir} = [draw, circle,  minimum size=2.15em]

\begin{tikzpicture}[auto, node distance=1 cm,>=to,line width=0.5pt]

    \node [coordinate] (input) {};
    \node [block, right = 1.5em of input, fill = gray!50] (OOK) {$M$-\\PAM}; 
    \node [block11,right = 2.1em of OOK, fill = orange!40] (INTER) {$\Pi$};
    \node [block,right = 1.2em of INTER, fill = JungleGreen!40 ] (FILTER) {Filter $p(t)$};
    \node [block,right = 1.2em of FILTER, fill = JungleGreen!20, label={[black]below:$\alpha$}] (EO) {IM};
    
    \node [block, right = 5.5 cm of EO, fill = JungleGreen!20, label={[black]below:$\beta$}] (OE) {PD}; 
    \node [block, right = 1.2 em of OE, fill = JungleGreen!40] (FILTER2) {Filter $p(-t)$}; 
    \node [block, right = 2.8em of  FILTER2, fill = orange!40 ] (INTER2) {$\Pi^{-1}$};
    \node [block, right = 2em of INTER2, fill = gray!50 ] (HD) {HD};
    \node [right = 1.5em of  HD] (output) {};


    \draw [draw,-latex] (input) -- node[midway,above]{$\boldsymbol{b}$} (OOK);
    \draw [draw,-latex] (OOK) -- node[midway,above, xshift=-0.2em]{$x_i$} (INTER); 
    \draw [draw,-latex] (INTER) -- node[midway,above]{}(FILTER); 
    \draw [draw,-latex] (FILTER) -- node[midway,above]{}(EO); 
    

    \draw [draw,-latex] (OE) -- node[midway,above]{}(FILTER2);
    \draw [draw,-] (FILTER2) -- node[midway,above]{}($(FILTER2.east)+(0.3,0)$);
    \draw [draw,-latex] ($(INTER2.west)+(-0.3,0)$) -- (INTER2);
    \draw [draw,-] ($(FILTER2.east)+(0.3,0.3)$) -- node[midway,below, yshift=-0.75cm]{$t=iT$} ($(INTER2.west)+(-0.3,0)$);
    \draw [draw,-latex] (INTER2) -- node[midway,above]{$y_i$} (HD);
    \draw [draw,-latex] (HD) --  node[midway,above,yshift=0.2em,text width=5em, align = center]{$\boldsymbol{\hat{b}}$}(output);



    \draw [draw,black,line width=0.1pt,latex-latex] ($(EO.east)+(0.3,-1.65)$) -- node[midway,above=9.5em,black,text width=15.5em, align = center]{Input-output channel model in \eqref{ChannelEq}} node[midway,above=0em,black,text width=5.5em, align = center]{$z$} ($(OE.west)+(-0.55,-1.65)$);

    \draw[dashed,very thick] ($(EO.south)+(-3.6,6.2em)$) -- ($(EO.south)+(-3.6,-3.6em)$);
    \draw[dashed,very thick] ($(EO.south)+(-3.6,6.2em)$) -- ($(OE.south)+(4.15,6.2em)$);
    \draw[dashed,very thick] ($(OE.south)+(4.15,6.2em)$) -- ($(OE.south)+(4.15,-3.6em)$);
    \draw[dashed,very thick] ($(EO.south)+(-3.6,-3.6em)$) -- ($(OE.south)+(4.15,-3.6em)$);



\node [cylinder, right = 0.1 cm of EO, shape border rotate=0, draw, minimum width=0.6 cm, minimum height=0.1cm, fill=gray, opacity=1] (ApertTX)  {};

\draw [solid, line width=0.5pt, fill=ashgrey, opacity=1] 
    ($(ApertTX.center)+(0.23cm,0)$) ellipse (0.1cm and 0.27cm) node[at=($(ApertTX.south)$), yshift=-1cm, xshift=0.2cm, scale=1]{};
        
       \fill[RoyalBlue!30] 
    ($(EO.east)+(0.5,0.1em)$) -- 
    ($(OE.west)+(-0.75,3.4em)$) -- 
    ($(OE.west)+(-0.75,-3.4em)$) -- 
    ($(EO.east)+(0.5,-0.1em)$) -- cycle;
      
    \draw[dashed,line width=0.5pt] ($(EO.east)+(0.5,0em)$) -- ($(OE.west)+(-0.5,0em)$);
    
 \node [cylinder,left = 0.15 cm of OE, shape border rotate=180, draw, minimum width=1.5cm, minimum height=0.7cm, fill=gray, opacity=1] (ApertRX)  {};

\draw [solid, line width=0.5pt, fill=ashgrey, opacity=1] 
    ($(ApertRX.center)+(-0.32cm,0)$) ellipse (0.1cm and 0.7cm) node[at=($(ApertRX.south)$), yshift=-0.6cm, xshift=0.2cm, scale=1]{};

\draw[solid,color=RoyalBlue, line width=0.5mm]  ($(ApertTX.center)+(1,1.4em)$) arc(23:-23:1.2) node[midway,right]{};
\draw[solid,color=RoyalBlue, line width=0.5mm]  ($(ApertTX.center)+(0.8,1.1em)$) arc(18:-18:1.2) node[midway,right]{};
\draw[solid,color=RoyalBlue, line width=0.5mm]  ($(ApertTX.center)+(1.2,1.6em)$) arc(28:-28:1.2) node[midway,right]{};
\draw[solid,color=RoyalBlue!80, line width=0.5mm]  ($(ApertTX.center)+(3.3,2.55em)$) arc(28:-28:1.85) node[midway,right]{};
\draw[solid,color=RoyalBlue!80, line width=0.5mm]  ($(ApertTX.center)+(3.6,2.75em)$) arc(29:-28:2) node[midway,right]{};
\draw[solid,color=RoyalBlue!80, line width=0.5mm]  ($(ApertTX.center)+(3.9,3em)$) arc(29:-28:2.2) node[midway,right]{};

\draw[latex-latex]  ($(EO.east)+(3.25,2.3em)$) arc(25:-25:1.9) node[midway,right, xshift=-0.5ex,yshift=-2ex]{$\theta$};

\node [cloud, draw=RoyalBlue!80,left = 3.1 cm of OE, fill=RoyalBlue!80, aspect=1.4, rotate=80]  {};
\node [cloud, draw=RoyalBlue!80,left = 2.45 cm of OE,yshift=1.5ex, fill=RoyalBlue!80, aspect=2, rotate=70]  {};
\node [cloud, draw=RoyalBlue!80,left = 2.75 cm of OE,yshift=-2.5ex, fill=RoyalBlue!80, aspect=1.1, rotate=60]  {};
\node [cloud, draw=RoyalBlue!80,left = 2.95 cm of OE,yshift=4ex, fill=RoyalBlue!80, aspect=1.8, rotate=90]  {};
\node [cloud, draw=RoyalBlue!80,left = 2.5 cm of OE,yshift=4.5ex, fill=RoyalBlue!80, aspect=1.2, rotate=100]  {};

 \draw[solid,line width=0.5pt, latex-latex] ($(OE.south)+(-1.26,3.55em)$) -- ($(OE.south)+(-1.26,1.45em)$)node[at=($(OE.west)$),yshift=1em, xshift=-0.98cm,scale=1]{$a$};

 \draw[solid,line width=0.5pt, latex-latex] ($(OE.south)+(-0.63,5.05em)$) -- ($(OE.south)+(-0.63,-2em)$)node[at=($(OE.west)$),yshift=2.5em, xshift=0.2cm,scale=1]{$2w_z$};;

\draw[draw, -latex]  ($(EO.east)+(2.3,2.7em)$) arc(25:-1:1.2) node[at=($(EO.east)$),xshift=6.5em, yshift=3.3em,scale=1]{$h_{a}$};

\draw[draw, -latex]  ($(EO.east)+(1.8,2.3em)$) arc(25:-1:1.2) node[at=($(EO.east)$),xshift=5em, yshift=2.9em,scale=1]{$h_{g}$};

\draw[draw, -latex]  ($(EO.east)+(1.3,2em)$) arc(25:-2:1.2) node[at=($(EO.east)$),xshift=3.4em, yshift=2.6em,scale=1]
{$h_{p}$};

\draw[draw, -latex]  ($(EO.east)+(2.8,3.1em)$) arc(25:-1:1.2) node[at=($(EO.east)$),xshift=8.6em, yshift=3.7em,scale=1]{$h_{l}$};

\draw [draw,-] (EO) --  ($(EO.east)+(0.1,0)$); 
\draw [draw,-] (OE) --  ($(OE.west)+(-0.15,0)$); 

\end{tikzpicture}  
    \caption{Block diagram of the IM/DD FSO communication system with $M$-PAM modulation under consideration. The channel effects consist of atmospheric turbulence ($h_{a}$), geometric spread ($h_{g}$), pointing errors ($h_{p}$), and atmospheric losses ($h_{l}$).}
    \label{fig:system}
\end{figure*}

Free-space optical (FSO) communication systems are currently considered a promising complementary technology to radio frequency (RF) links~\cite{Ghassemlooy15}. These systems offer a mobile broadband solution to meet the demands of massive connectivity and large data traffic in the era of beyond 5G and 6G technologies.
The growing interest in FSO communications is mainly due to the vast available bandwidth which is much larger than that of RF-based wireless systems, promising ultra-high data rates on the order of several Gbps~\cite{5G}. Furthermore, FSO systems offer high security due to their narrow beam width, along with easy deployment, low cost, and low power consumption \cite[Sec. 3.1]{AdvantFSO}. Additionally, the spectrum above 300 GHz does not require licensing, allowing operators to access optical bands without paying for exclusive rights~\cite{Survey}. 

The main disadvantage of FSO is the unreliability of the communication channel. This unreliability is caused by various factors, including atmospheric attenuation, atmospheric turbulence, geometric spread, and pointing errors~\cite{Farid2007}. The channel model studied in this work considers these four effects as the main factors that independently create multiplicative channel gain coefficients, ultimately limiting the performance of the underlying communication system.

Atmospheric attenuation occurs due to particles that absorb or scatter the transmitted light. Specifically, particles that affect FSO communication systems include those associated with rain, snow, fog, pollution, smoke, and other environmental factors \cite{Atmos_Atten}. Atmospheric attenuation is typically modeled as a deterministic function of distance, as it is considered a fixed scaling factor over long periods of time, with no inherent randomness in its behavior.

Atmospheric turbulence distorts both the intensity and phase of the optical signal. This distortion is due to the fact that turbulence changes air pressure and temperature along the propagation path, causing random variations in the atmospheric refractive index~\cite{XuSensors2021}. In the literature, a large number of statistical models have been proposed to characterize the impact of atmospheric turbulence, particularly in terms of intensity fluctuations, which are a key challenge in intensity-modulation/direct-detection (IM/DD) systems. These mathematical models can be classified according to their regions of validity.
In the case of weak turbulence, the log-normal distribution has been proposed to model intensity fluctuations \cite{LOGNORMAL}. In moderate and strong atmospheric turbulence regimes, the Gamma-Gamma distribution and the K-distribution are commonly considered~\cite{AlH01}. 
In this paper, we focus on IM/DD systems and the weak turbulence regime assuming a log-normal distribution.

The third and fourth effects under consideration are {geometric} spread and pointing errors. Geometric spread refers to the natural divergence of a wave as it propagates through space. As the wave moves away from the transmitter, its energy spreads over a larger area, reducing the power density at the receiver.
Pointing errors are caused by misalignment of the transmit beam with respect to the field of view of the receiver. This misalignment is often caused by meteorological phenomena such as wind loads
and thermal expansions. 
As a result, random movements occur in the relative position of the transmitter and receiver. 
FSO systems reduce pointing errors by using active link alignment systems and acquisition, tracking, and pointing mechanisms~\cite{Korevaar03,Tracking}.
In this work, we consider small pointing errors, which are expected to be achieved by future FSO systems.

In order to assess the performance of FSO communication systems, multiple works (e.g., \cite{Farid2007, Boluda2017, 7501584}) have modeled the combined impact of atmospheric attenuation, atmospheric turbulence, geometric spread, and pointing errors. 
The statistical model presented in \cite{Farid2007} is one of the most widely used to characterize optical intensity fluctuations at the receiver in the weak turbulence regime. This model provides the joint probability density function (PDF) that describes the complete channel response determined by the four phenomena considered in this paper~\cite[Sec.~III-D]{Farid2007}. 
This PDF is expressed using the complementary error function. The integral form of this function makes it difficult to find simple expressions of important performance metrics such as the bit error rate (BER) and the symbol error rate (SER). 
The exact average BER and SER expressions for an FSO communication system using $M$-ary pulse amplitude modulation ($M$-PAM) involve a double integral that is complex to evaluate.

Recently, we proposed in~\cite{ICTON, OECC} easy-to-compute approximations for the average BER for on-off keying (OOK). 
Our proposed expressions in~\cite{ICTON, OECC} are limited to OOK and do not always provide accurate results. 
In this paper, we extend our previous works~\cite{ICTON, OECC} in three different ways, leading to the following three main contributions of this paper.
First, we derive an exact expression of the average SER for $M$-PAM formats, thereby extending our previous work on BER to SER, and also from OOK to high-order modulation formats. 
Second, we propose a new simplified approximate SER expression for $M$-PAM. Our approximation is a sum of two one-dimensional integrals and is easier to compute than the exact one.
We show that our proposed SER approximation for $M$-PAM is in excellent agreement with the exact expression, and thus with the Monte Carlo simulations. To the best of our knowledge, our expression is the most accurate approximation for the average SER for an FSO channel affected by weak atmospheric turbulence, geometric spread, pointing errors, and atmospheric losses.
Lastly, we demonstrate the usefulness of the developed approximations by performing two asymptotic analyses. First, we investigate the additional transmit power required to maintain the same SER when the spectral efficiency increases by $1$ bit/symbol. Second, we study the asymptotic behavior of our SER approximation for dense PAM constellations and high transmit power.

The paper is organized as follows. In Sec.~\ref{Sec:SystemModel}, we present the system model considered in this work. In Sec.~\ref{Sec:effect}, the main channel-induced impairments for the atmospheric link are discussed, and the distribution of the composite channel gain coefficient is presented. The main contributions of this work are presented in Sec.~\ref{Sec:SER}. Finally, numerical results and conclusions are presented in Secs.~\ref{sec:results} and~\ref{sec:Conclusions}, resp.

 \vspace{-1ex}
 \section{System Model}
\label{Sec:SystemModel}

We consider an FSO system that employs IM and DD with $M$-PAM modulation, as illustrated in Fig.~\ref{fig:system}. Compared to OOK, $M$-PAM transmits $m$ bits per symbol, where $m=\log_2 M$, which offers higher spectral efficiencies. IM is realized by first converting vectors of $m$ independent and uniformly distributed information bits $\boldsymbol{b}=(b_1,b_2,\dotsc,b_{m})$ into an $M$-PAM symbol $x_{i}$ at time instant $i$, where $x_{i} \in \mathcal{X} $, and where \mbox{$\mathcal{X} \triangleq \{j2\hat{X}/(M-1): j=0,1,\dotsc,M-1\}$}.\footnote{Notation convention: Bold symbols denote vectors, capital letters denote random variables (RVs), calligraphic letters denote sets, and $\mathbb{E}[\cdot]$ denotes expectation.}  
The bit-to-symbol mapping follows the binary reflected Gray code (BRGC)~\cite{BRGC}. 
The $M$-PAM symbols are then interleaved by $\Pi$ and passed through an ideal filter that generates an electrical signal. This electrical signal is then used to modulate the light source. The average optical transmitted power is defined as $P\triangleq \alpha \mathbb{E}[X]$, where $\alpha$ is the electro-optical conversion factor. In this work, we assume $\alpha=1$ (W/A), and thus it can be shown that the power constraint is satisfied for $\mathcal{X}$ if $\hat{X}=P$. From now on, we consider \mbox{$\mathcal{X} \triangleq \{j2P/(M-1): j=0,1,\dotsc,M-1\}$}. 

The optical signal is then transmitted through the FSO channel. At the receiver side, a photodetector (PD) detects the power of the signal. The photocurrent generated by the PD is directly proportional to the incident optical power via the detector responsivity $\beta$ (A/W). The resulting electrical signal is then passed through a receiver filter and sampled at the symbol rate $1/T$ to recover the transmitted symbols.
Finally, after deinterleaving by $\Pi^{-1}$, the vector of bit estimates $\boldsymbol{\hat{b}}=(\hat{b}_1,\hat{b}_2,\dotsc,\hat{b}_{m})$ is obtained by making hard decisions (HD) on the symbols $y_{i}$. 


The input-output relationship between symbols $x_i$ and $y_i$ (see Fig.~\ref{fig:system}) for an FSO link can be modeled as~\cite[eq.~(1)]{Farid2007}
\begin{equation}
     y_{i}= \eta h_i x_{i} +n_{i},
    \label{ChannelEq}
\end{equation}
where $\eta=\alpha\beta$, $h_{i}$ is the channel gain coefficient, and $n_{i}$ is a zero-mean additive white Gaussian noise sample with variance $\sigma_n^2$. 
The coherence time of the channel and the time scales of the fading processes (on the order of milliseconds,~\cite{SlowFading},~\cite[Sec.~II]{Farid2007}) are much larger than the duration of the bit (in the order of nanoseconds for Gbps rates \cite{Farid2007}). Therefore, the channel gain coefficient $h_i$ in \eqref{ChannelEq} is constant over a large number of transmitted symbols, which effectively results in a slow fading channel. However, in this paper, we assume an infinitely long interleaver $\Pi$, and thus the input-output relationship between symbols $x_i$ and $y_i$ is modeled via~\eqref{ChannelEq} under the assumption that $h_i$ are i.i.d. random variables. This assumption causes each received symbol to experience a different fading realization.

Based on the discussion above, we conclude that the channel model in \eqref{ChannelEq} is memoryless, and thus, from now on, we omit the index $i$. The real-valued channel gain coefficient $h$ in \eqref{ChannelEq} models the random attenuation of the signal intensity in the propagation channel. In our model, this attenuation is due to four factors: atmospheric loss ($h_{l}$), geometric spread ($h_{g}$), pointing errors ($h_{p}$), and atmospheric turbulence ($h_{a}$). Since these phenomena create independent multiplicative losses, $h$ can be formulated as~\cite[eq.~(2)]{Farid2007} 
\begin{equation}
    H=h_{l}h_{g}H_{a}H_{p}.
    \label{Coeff_H}
\end{equation}
For a given distance $z$, $h_{l}$ and $h_{g}$ are assumed to be constant. On the other hand, $H_{a}$ and $H_{p}$ are RVs whose distributions will be discussed in the next section.

In this work, we consider maximum likelihood detection with perfect channel state information, i.e., both $h$ and $\eta$ in~\eqref{ChannelEq} are assumed to be known at the receiver.
We distinguish between two definitions of the received signal-to-noise ratio (SNR): the average received \emph{electrical} SNR, which is commonly used in the field of information and communication theory, and the received \emph{optical} SNR which is sometimes considered a more appropriate metric for FSO links, as described in~\cite[Sec.~11.2.1]{And05}. 

The average received electrical SNR is defined as 
\begin{align}
 \label{SNR_el_av0}
    \overline{\text{SNR}}_\text{e} &\triangleq \frac{\eta^2  \mathbb{E}[X^2]\mathbb{E}[H^2]}{\sigma_n^2} \\
    \label{SNR_el_av1}
    &= \frac{\eta^2 \sum_{j=0}^{M-1} \left(\frac{j2P}{M-1}\right)^2 h_{l}^2 h_{g}^2\mathbb{E}[H_{a}^2]\mathbb{E}[H_{p}^2]}{\sigma_n^2}\\
    &= \frac{\eta^2 \frac{2P^2 \cdot M(2M - 1)}{3(M - 1)}  h_{l}^2 h_{g}^2\mathbb{E}[H_{a}^2]\mathbb{E}[H_{p}^2]}{\sigma_n^2} ,
    \label{SNR_el_av}
\end{align}
where $\mathbb{E}[X^2]$ represents the average electrical transmitted power. For $M$-PAM constellations considered in this paper, \eqref{SNR_el_av} shows that the average electrical transmitted power varies with the modulation order $M$.

We conclude this section by defining the average received optical SNR as in \cite[Sec.~11.2.1]{And05}\footnote{The optical SNR defined in this work should not be confused with the optical SNR (OSNR) typically used in optical fiber networks, which quantifies the ratio of signal power to optical noise power within a given optical bandwidth \cite[eq.~(33)]{OSNR}.}
\begin{equation}
    \overline{\text{SNR}}_\text{o} \triangleq\frac{\eta \mathbb{E}[X] \mathbb{E}[H]}{\sigma_n}=\frac{\eta  Ph_{l} h_{g}\mathbb{E}[H_{a}]\mathbb{E}[H_{p}]}{\sigma_n}.
    \label{SNR_op_av}
\end{equation}
The definition in \eqref{SNR_op_av} reflects the ratio between the detector signal photocurrent and the noise standard deviation $\sigma_n$. Unlike $\overline{\text{SNR}}_\text{e}$ in \eqref{SNR_el_av}, the right-hand side of \eqref{SNR_op_av} shows that $\overline{\text{SNR}}_\text{o}$ remains constant across different modulation formats. Both \eqref{SNR_el_av} and \eqref{SNR_op_av} show the importance of the first and second moments of $H_{a}$ and $H_{p}$. We will discuss these in Sections~\ref{subSec_turbulence} and \ref{subSec_pointing} below.


 \vspace{-1ex}
 \section{Optical Channel Fading Model}
 \label{Sec:effect}

  \subsection{Atmospheric Attenuation}
 
The propagation of light through the atmosphere is affected by absorption and scattering. 
These phenomena cause the loss of a part of the transmitted power. 
This loss can be modeled via the well-known Beer-Lambert law given in~\cite[Sec. III-B]{Farid2007} and~\cite[eq.~(1)]{hl} as
\begin{equation}
    h_{l} = \exp{(-\sigma(\lambda) z)},
    \label{eq:hl}
\end{equation}
where $\sigma(\lambda)$ is a wavelength-dependent attenuation coefficient (dB/km). 
This coefficient depends on weather conditions and can be obtained from atmospheric visibility according to Kim's model~\cite[eq.~(3)]{Kim}. Since the attenuation coefficient slowly changes over time, in our analysis we assume that $h_l$ is a deterministic coefficient.

 \subsection{Atmospheric Turbulence}
 \label{subSec_turbulence}
 
 In order to develop mathematical models to characterize the signal fluctuations caused by atmospheric turbulence $H_{a}$ in a wide range of atmospheric conditions, numerous studies have been carried out (see, e.g., \cite{AlH01,Jur11}, and references therein). In this work, we consider the weak turbulence regime. For weak turbulence, the PDF of $H_{a}$ is typically modeled using a log-normal distribution \cite{AlH01} which is given by \cite[eq.~(1)]{Ansari2014} and \cite[eq.~(16)]{Niu12} as
\begin{equation}   
    f_{H_{a}}(h_{a})=\frac{1}{h_{a}\sqrt{2\pi\sigma^2}}\exp{\left(-\frac{(\hbox{ln}(h_{a})-\delta)^2}{2\sigma^2}\right)}, \hspace{1ex} h_a>0, 
    \label{eq:fha_LN}
\end{equation}
where $\delta$ is the log-scale parameter \cite{Ansari2014} and $\sigma^2$ is the log-variance of $H_a$, which is directly linked to amount of turbulence in the channel.

In this work, we set $\delta=-\sigma^2$ in order to normalize the second moment of the atmospheric turbulence to one, i.e., 
\begin{equation}
\label{norm.Ha}
\mathbb{E}[H_{a}^2]=\exp{\left( 2\delta + 2\sigma^2\right)}=1.
\end{equation}
Although the normalization in \eqref{norm.Ha} is not commonly used in the FSO literature, we adopt it here to maintain a constant average \emph{electrical} SNR (see \eqref{SNR_el_av}) when the modulation format is fixed, even if turbulence conditions change.\footnote{This is for a given value of $\mathbb{E}[H_{p}^2]$.} This normalization ensures a fair comparison of channels with different turbulences, which we analyze in this paper, and follows the standard practice in wireless communications \cite[Sec.~III-C]{Proakis_Paper}. 

Under the normalization in \eqref{norm.Ha} obtained using $\delta=-\sigma^2$, the first moment of $H_{a}$ is given by
\begin{equation}\label{Ha.1st}
\mathbb{E}[H_{a}]=\exp{\left(\delta +\sigma^2/2\right)}=\exp{\left(-\sigma^2/2\right)},
\end{equation}
which shows that as $\sigma^2$ increases, the optical SNR in \eqref{SNR_op_av} decreases (again, for a fixed value of $\mathbb{E}[H_{p}^2]$). This is the expected behavior: the higher turbulence, the lower the average optical SNR.\footnote{This normalization differs from the standard normalization used in the literature (see, e.g., \cite{Farid2007,XuSensors2021,7501584}), where $\mathbb{E}[H_{a}]$ is normalized to 1. The normalization used in the literature leads to a value of $\mathbb{E}[H_{a}^2]$ that is proportional to $\sigma^2$, which in turn results in an unexpected result: the higher the turbulence, the higher the average electrical SNR.}
 
Under weak turbulence regime $\sigma^2\approx\sigma_R^2$ and $\sigma_R^2<1$~\cite[Fig.~7.4]{And05}, where $\sigma_R^2$ is the Rytov variance. The Rytov variance measures the degree of fluctuation of the optical signal intensity due to turbulence effects and is defined as
\begin{equation}
    \sigma_R^2=1.23 C_n^2(L)k^{7/6}z^{11/6}.
    \label{Rytov}
\end{equation}
In \eqref{Rytov}, $C_n^2(L)$ is the index of refraction structure parameter at altitude $L$ from ground level (assumed to be constant along the propagation path for horizontal links) and $k=2\pi/\lambda$ is the optical wave number, where $\lambda$ represents the wavelength of the optical signal. Note that $\sigma_R^2$ in \eqref{Rytov} can be measured directly from atmospheric parameters, such as temperature, pressure, and humidity, which influence the refractive index and, consequently, $C_n^2(L)$. 
By varying $\sigma_R^2$, and consequently $\sigma^2$, we can control the level of turbulence in the considered system.


 \subsection{Geometric Spread and Pointing Errors}
 \label{subSec_pointing}

In FSO links, in addition to atmospheric turbulence, geometric spread and pointing accuracy also affect the performance of the system. The geometric spread is caused by the divergence of the transmit beam when propagating through the atmosphere. This spread is modeled via $w_z \approx \theta \cdot z/2$, where $w_z$ is the received beam waist radius and $\theta$ is the transmitter divergence angle, as shown in Fig. \ref{fig:system}. Since the received beam width is usually wider than the PD aperture size $a$, part of the transmitted power cannot be collected, leading to losses. Thus, this geometric spread depends mainly on the ratio between the received beam waist $w_z$ and the receiver aperture radius $a$. 
On the other hand, pointing error effects produced by wind loads and thermal expansions result in building sways leading to a misalignment between transmitter and receiver. 
Additionally, pointing errors worsen due to imperfections in the beam pointing, acquisition, and tracking processes. 

Geometric spread and pointing accuracy are interrelated effects, and thus they are usually jointly modeled in the literature\cite{Farid2007, Boluda2017, Pointing_Geomet}. 
However, in this work, without loss of generality, we model them separately using the general framework proposed in~\cite{Farid2007}. 
Specifically,~\cite{Farid2007} assumes a beam profile with beam waist $w_{z}$ on the receiver plane, and a circular aperture receiver of radius $a$, as shown in Fig.~\ref{Fig:hp}. Under these assumptions, we present models for each effect.

\subsubsection{Geometric Spread Model}
Geometric spread affects the fraction of
optical power collected by the PD. More precisely, $h_{g}$ represents the fraction of the collected power in the absence of pointing errors, i.e., when the radial displacement $r$ between the beam center and the PD aperture center is equal to zero. This corresponds to the case shown in Fig.~\ref{Fig:hp} (left). The attenuation due to the geometric spread is defined as~\cite[Sec.~III-C]{Farid2007} 
\begin{equation}
    \label{eq:hg}
h_{g}\triangleq\mathrm{erf}(v_0)^2,
\end{equation}
where $\mathrm{erf}(\cdot)$ is the error function, and
\begin{equation}
    v_0\triangleq \frac{\sqrt{\pi}a}{\sqrt{2}w_z}.
\end{equation}
For a given distance $z$, the attenuation due to geometric spread $h_{g}$ is constant.

\subsubsection{Pointing Error Model}
Pointing errors affect the alignment between the received optical beam and the PD aperture, leading to additional power loss. More precisely, $H_{p}$ represents the fraction of collected power when a misalignment occurs, i.e., when $r>0$ (see Fig.~\ref{Fig:hp} (right)). 
The attenuation due to pointing errors can be approximated as~\cite{Farid2007} 
\begin{equation}
    H_p \approx \kappa \exp{\left( -\frac{2R^2}{\hat{w}_{z}^2}\right)} ,
    \label{hp}
\end{equation}
where $R$ is the (random) radial displacement and $\hat{w}_{z}^2$ is the equivalent beam width. Here $\hat{w}_{z}^2$ is given by \cite[eq.~(9)]{Farid2007}
\begin{equation}
    \hat{w}_{z}^2\triangleq \frac{w_z^2 \sqrt{\pi}\mathrm{erf}(v_0)}{2 v_0 \exp(-v_0^2)}.
\end{equation}
In analogy to the atmospheric turbulence analysis, we included a normalizing parameter $\kappa$ in \eqref{hp} defined as
\begin{equation}
    \label{eq:Kappa}
    \kappa\triangleq\sqrt{\frac{\gamma^2+2}{\gamma^2}}.
\end{equation}
This normalization parameter results in a normalized second moment for $H_p$, as we will see below.

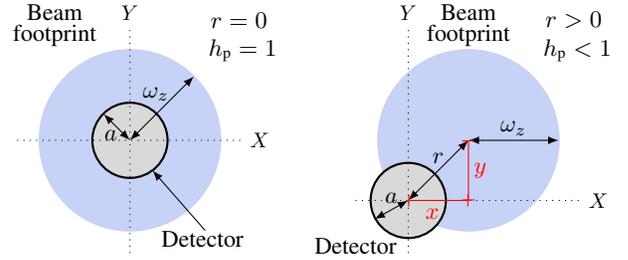
\begin{figure}[t]
        \centering
        \definecolor{RoyalBlue}{rgb}{0.25, 0.41, 0.88}

        \begin{tikzpicture}[scale=1]

        \filldraw[fill=RoyalBlue!30, draw=RoyalBlue!30, thick] (-1,0) circle (1.2);  
           \filldraw[fill=gray!30, draw=black, thick] (-1,0) circle (0.5);  

             \filldraw[fill=RoyalBlue!30, draw=RoyalBlue!30, thick] (3.5,0) circle (1.2);  
           \filldraw[fill=gray!30, draw=black, thick] (2.7,-0.8) circle (0.5);  %

              \node at (0.45, 1.6) {\small ${r=0}$};
             \node at (0.5, 1.2) {\small ${h_\text{p}=1}$};
              \node at (-2,1.7) {\small Beam};
            \node at (-2, 1.4) {\small footprint};
            
             \node at (4.9, 1.6) {\small ${r>0}$};
             \node at (4.95, 1.2) {\small ${h_\text{p}<1}$};
              \node at (3.5,1.7) {\small Beam};
            \node at (3.5, 1.4) {\small footprint};
             
            \draw[-, dotted,thin] (-2.5,0) -- (0.5,0) node[right,scale=0.8] {$X$};
            \draw[-, dotted,thin]  (-1,-1.5) -- (-1,1.5) node[above,scale=0.8] {$Y$};

            \draw[-, dotted,thin] (2,-0.8) -- (5,-0.8) node[right,scale=0.8] {$X$};
            \draw[-, dotted,thin]  (2.7,-1.5) -- (2.7,1.5) node[above,scale=0.8] {$Y$};

 
            \draw[latex-latex,thin] (-1,0) -- (-0.15,0.85)node[midway,yshift=1.2ex, xshift=-0.5ex,scale=0.9] {\normalsize $\omega_z$};
            
            \draw[latex-latex,thin] (-1,0) -- (-1.35,0.35)node[midway,yshift=-0.7ex, xshift=-0.5ex,scale=0.9] {\normalsize $a$};

              \draw[-latex,thin] (0,-1.2) -- (-0.7,-0.4)node[midway,yshift=-3.2ex, xshift=2ex,scale=1] {\small Detector};

            \draw[latex-latex,thin] (2.7,-0.8) -- (2.25,-1.05)node[midway,yshift=1ex,xshift=0ex,scale=0.9] {\normalsize$a$};

            \draw[latex-latex,thin] (2.7,-0.8) -- (3.5,0)node[midway,yshift=1ex,xshift=0ex,scale=0.9] {\normalsize ${r}$};

            \draw[latex-latex,thin] (3.5,0) -- (4.7,0)node[midway,yshift=1ex,scale=0.9] {\normalsize $\omega_z$};

            \draw[|-|,thin, color=red] (2.7,-0.8) -- (3.5,-0.8)node[midway,below,xshift=-0.5ex,scale=0.9] {\normalsize $x$};

            \draw[|-|,thin, color=red] (3.5,-0) -- (3.5,-0.8)node[midway,xshift=1ex,scale=0.9] {\normalsize $y$};

            \node at (2,-1.4) {\small Detector};

        \end{tikzpicture}
    \caption{Detector (gray) and beam footprint (blue) without misalignment (left) and with misalignment (right) on the detector plane.}
   \label{Fig:hp}
\end{figure}

We assume i.i.d. Gaussian-distributed horizontal $x$ and vertical $y$ displacements in the receiver plane, see Fig.~\ref{Fig:hp} (right), and thus the radial error $r=\sqrt{x^2+y^2}$ is modeled via a Rayleigh distribution 
\begin{equation}
    f_R(r)=\frac{r}{\sigma_s^2}\exp{\left( -\frac{r^2}{2\sigma_s^2}\right)}, \quad r>0,
    \label{RadialPDF}
\end{equation}
where $\sigma_s^2$ is the jitter variance at the receiver. The parameter $\sigma_s$ is estimated as $\sigma_s\approx\theta_s z$, where $\theta_s$ is defined as the jitter angle. By varying $\theta_s$, $\sigma_s$ can be adjusted, which allows us to control the level of pointing errors in our model.

Under the assumptions above, the channel attenuation due to pointing errors $H_{p}$ can be seen as a function of the radial displacement $R$. In this work, the PDF of the pointing errors is based on the formulation in \cite[Sec.~III-C]{Farid2007}, which leads to
\begin{equation}
    f_{H_{p}}(h_{p})= \frac{\gamma ^2}{\kappa^{\gamma ^2}} h_{p}^{\gamma^2-1}, \quad  0\leq h_{p} \leq \kappa,
    \label{pdf:hp}
\end{equation} 
where 
\begin{equation}
\label{eq:gamma}
\gamma 
\triangleq \frac{\hat{w}_{z}}{2\sigma_s},
\end{equation}
 and where $\kappa$ is given in \eqref{eq:Kappa}.
In \eqref{eq:gamma}, $\gamma$ denotes the ratio between the equivalent beam radius at the receiver and the standard deviation of the pointing error displacement. In analogy with the normalization applied in the atmospheric turbulence modeling, we incorporated here the normalizing parameter $\kappa$ in \eqref{pdf:hp}, which results in a normalized second moment for $H_p$, i.e.,
\begin{equation}
\label{norm.Hp}
    \mathbb{E}[H_{p}^2]=\kappa^2\frac{\gamma^2}{\gamma^2+2}=1.
\end{equation} 
Under the normalization in \eqref{norm.Hp}, the first moment of $H_{p}$ is given by
\begin{equation}\label{Hp.1st}
\mathbb{E}[H_{p}]=\kappa\frac{\gamma^2}{\gamma^2+1}.
\end{equation}
In view of \eqref{eq:gamma}, \eqref{Hp.1st} shows that as $\sigma_s^2$ increases, $\mathbb{E}[H_{p}]$ decreases, leading to a decrease in the optical SNR in \eqref{SNR_op_av}. This is again the expected behavior: the higher the severity of pointing errors, the lower the average optical SNR.


 \subsection{Composite Channel Model}

In the weak turbulence regime, the distribution of the composite channel gain coefficient \hbox{$H=h_{l}h_{g}H_{a}H_{p}$}, where $f_{H_{a}}(h_{a})$ and $f_{H_{p}}(h_{p})$ are given by~\eqref{eq:fha_LN} and~\eqref{pdf:hp}, resp., can be expressed as
 \begin{align}
     f_{H}(h)&=\frac{\gamma^2h^{\gamma^2 -1}}{2(h_{g} h_{l} \kappa)^{\gamma^2}}  \hbox{erfc} (v) \exp{\left[ \gamma^2\sigma_R^2 \left(1+\gamma^2/2\right)\right]},
     \label{eq:fh_LN_sim}
   \end{align}
where $h_{l}$ is given by \eqref{eq:hl}, $\sigma_R^2$ by \eqref{Rytov}, $h_{g}$ by \eqref{eq:hg}, $\gamma$ by \eqref{eq:gamma},
 \begin{equation}
v \triangleq\frac{\mathrm{ln}\bigl(\frac{h}{h_lh_g \kappa}\bigr)+\mu}{\sqrt{2\sigma^2}},
\label{psi}
\end{equation}
 \begin{equation}
\mu \triangleq \sigma_R^2(\gamma^2+1), 
\label{eq:mu}
\end{equation}
and the complementary error function is defined as
 \begin{equation}
\mathrm{erfc}(z)\triangleq\frac{2}{\sqrt{\pi}}\int_{z}^{\infty} \exp{(-t^2)} dt = 2 Q\left(\sqrt{2}z\right),
\label{eq:erfc}
\end{equation}
where $Q(\cdot)$ denotes the Gaussian Q-function.

Fig.~\ref{FigPDF} shows the PDF of $H$ in \eqref{eq:fh_LN_sim} for nine different operating points, obtained by jointly varying the impact of atmospheric turbulence and pointing errors. 
The severity of turbulence is constrained to $\sigma_R^2\leq 1$, which corresponds to the weak turbulence regime~\cite[Fig.~7.4]{And05}. 
The severity of pointing errors is varied from the mildest pointing error regime $\sigma_s=0.35$~m (corresponding to a jitter angle of $\theta_s=0.116$) to the strictest pointing error regime $\sigma_s=0.2$~m (corresponding to a jitter angle of $\theta_s=0.067$)~\cite{Korevaar03,Boluda2017}. 
Specifically, three levels of atmospheric turbulence are considered, $\sigma_R^2=0.1$, $\sigma_R^2=0.5$, and $\sigma_R^2=0.9$, shown in Fig.~\ref{FigPDF} with pink, orange, and green, resp. In terms of pointing errors, we consider $\sigma_s=0.35$~m, $\sigma_s=0.25$~m, and $\sigma_s=0.2$~m represented with solid, dashed, and dotted curves, resp.
We consider the attenuation coefficient \mbox{$\sigma(\lambda)=0.2208$~(dB/km)} corresponding to clear air weather condition. The remaining simulation parameters are summarized in Table~\ref{tab:System_params} (Sec.~\ref{sec:results}, ahead).
Table \ref{tab:Values_E_h} presents the numerical value of $\mathbb{E}[H]$ corresponding to each operating point. A higher $\mathbb{E}[H]$ implies a higher average optical SNR (see~\eqref{SNR_op_av}).
Among all the operating points, the green solid curve in Fig.~\ref{FigPDF}—which corresponds to the combination of the highest severity of turbulence ($\sigma_R^2=0.9$) and the mildest pointing error regime ($\sigma_s=0.35$~m)—exhibits the lowest mean channel gain, i.e., the smallest value of the first moment $\mathbb{E}[H]$ (see Table~\ref{tab:Values_E_h}), and consequently the lowest average optical SNR.\footnote{Note that due to the normalization in \eqref{norm.Ha} and \eqref{norm.Hp}, the second moment of the channel gain coefficient $\mathbb{E}[H^2]$ remains constant across the nine operating points.} This scenario represents the most severe channel degradation among those considered, as it combines the strongest effects of atmospheric turbulence and pointing errors. 

Furthermore, both Fig. \ref{FigPDF} and Table \ref{tab:Values_E_h} show that increasing the severity of the pointing errors (from $\sigma_s=0.2$~m to $\sigma_s=0.35$~m) has only a minor impact in the first moment of the channel gain coefficient $\mathbb{E}[H]$, and consequently on the average optical SNR. This observation suggests that atmospheric turbulence is the dominant factor in the performance degradation observed in this study.

\begin{figure}[tbp!]
        \centering
            \usetikzlibrary{arrows,decorations.markings}
    \usetikzlibrary{calc,arrows}    
    
    \definecolor{my_pink}{rgb}{1,0.07,0.65}
\definecolor{my_g}{rgb}{0.09,0.71,0.14}
\definecolor{FEC}{rgb}{0.5, 0.0, 0.13}
    \begin{tikzpicture}[scale=1]    
    
   \begin{semilogyaxis}[ 
       width=0.48\textwidth,
       height=3.1in,   
        xmin=1e-6, xmax=1.6e-3,
        ymin=1e1, 
        ymax=2.5e3,  
        xlabel shift=-1ex,
        xlabel={$h$}, 
        ylabel shift=-1ex,
        ylabel={$f_H(h)$},
        grid=major
    ]
    \addplot [color=black,solid,line width=1pt,mark options={solid}]coordinates {
             (1e-6,0.4) (1e-6,0.5)};\label{True}
    \addplot [color=black,dashed ,line width=1pt,mark options={solid}]coordinates {
             (1e-6,0.4) (1e-6,0.5)};\label{Dashed}
    \addplot [color=black,dotted,line width=1pt,mark options={solid}]coordinates {
             (1e-6,0.4) (1e-6,0.5)};\label{Aprox1}


    
     \addplot [color=orange,dashed,line width=1pt,mark options={solid}]file {./txtData/PDF/Orig_varR05_jit25.txt};

       \addplot [color=my_pink,solid,line width=1pt,mark options={solid}]file {./txtData/PDF/Orig_varR01_jit35.txt};
    
       \addplot [color=my_g,dotted,line width=1pt,mark options={solid}]file {./txtData/PDF/Orig_varR09_jit2.txt};

       \addplot [color=my_g,dashed,line width=1pt,mark options={solid}]file {./txtData/PDF/Orig_varR09_jit25.txt};
       \addplot [color=my_g,solid,line width=1pt,mark options={solid}]file {./txtData/PDF/Orig_varR09_jit35.txt};
       
       \addplot [color=orange,solid,line width=1pt,mark options={solid}]file {./txtData/PDF/Orig_varR05_jit35.txt};
       
       \addplot [color=orange,dotted,line width=1pt,mark options={solid}]file {./txtData/PDF/Orig_varR05_jit2.txt};

       \addplot [color=my_pink,dashed,line width=1pt,mark options={solid}]file {./txtData/PDF/Orig_varR01_jit25.txt};
       
       \addplot [color=my_pink,dotted,line width=1pt,mark options={solid}]file {./txtData/PDF/Orig_varR01_jit2.txt};

    \end{semilogyaxis}

\node [draw,fill=white,anchor= south west,font=\scriptsize] at (1.2,0.1) {\shortstack[l]{ 
\ref{True}  $\sigma_s=0.35$~m\\
\ref{Dashed} $\sigma_s=0.25$~m\\
\ref{Aprox1}  $\sigma_s=0.2$~m}};

    \node [coordinate](input) {};

    \draw[-stealth,solid, black,line width=0.7pt,rounded corners]($(input.north)+(4.4,4.5)$) -- ($(input.north)+(3.3,4)$);
\node at ($(input.north) + (5.3, 4.8)$) [scale=1.2, font=\scriptsize] {Increasing Impact};
\node at ($(input.north) + (5.3,4.5)$) [scale=1.2, font=\scriptsize] {of Turbulence};

\node at ($(input.north) + (4.7, 6.1)$) [scale=1.2, font=\scriptsize] {Increasing Impact};
\node at ($(input.north) + (4.7,5.8)$) [scale=1.2, font=\scriptsize] {of Pointing errors};
  \draw[-stealth,solid, black,line width=0.7pt,rounded corners]($(input.north)+(3.6,5.7)$) -- ($(input.north)+(3,5.3)$);
  
\node[ellipse, color=my_g, line width=0.7pt, draw,minimum width = 0.6cm, minimum height = 0.05cm,rotate=55] (e) at (2.5,4.8) {};

\node [draw,color=my_g, fill=white,anchor= south west,font=\scriptsize,text=black] at (1.5,3.9) {\shortstack[l]{
$\sigma_R^2=0.9$}};

\node[ellipse, color=orange, line width=0.7pt, draw,minimum width = 0.6cm, minimum height = 0.05cm,rotate=185] (e) at (0.45,4) {};

\node [draw,color=orange, fill=white,anchor= south west,font=\scriptsize,text=black] at (1.4,2.6) {\shortstack[l]{
$\sigma_R^2=0.5$}};
 \draw[-stealth,solid, black,line width=0.7pt,rounded corners]($(input.north)+(0.75,3.9)$) -- ($(input.north)+(1.35,3.1)$);

\node[ellipse, color=my_pink, line width=0.7pt, draw,minimum width = 0.6cm, minimum height = 0.05cm,rotate=185] (e) at (5.6,1.9) {};

\node [draw,color=my_pink, fill=white,anchor= south west,font=\scriptsize,text=black] at (3.8,1.7) {\shortstack[l]{
$\sigma_R^2=0.1$}};

    \end{tikzpicture}
    \caption{PDF of $H$ in \eqref{eq:fh_LN_sim} for nine different operating points ($\sigma_s$, $\sigma_R^2$).}
    \label{FigPDF}
\end{figure}
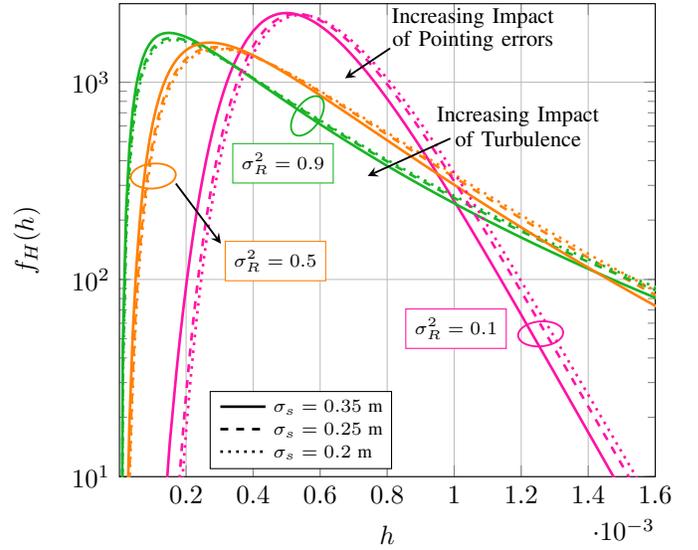

\begin{table}[t]
\renewcommand{\arraystretch}{1.3}
\centering
\caption{Numerical Values of $\mathbb{E}[H]$ for Each Operating Point ($\sigma_s$, $\sigma_R^2$).}
\begin{tabular}{cccc}
\hline
{\textbf{$\boldsymbol{\sigma_s}$}}& {\textbf{$\boldsymbol{\sigma_R^2}$}} & \textbf{Trace in Fig.~\ref{FigPDF}} &{\textbf{$\boldsymbol{\mathbb{E}[H]}$}} \\
\hline
\multirow{3}{*}{$0.35$~m}& $0.9$    & {\color{my_g}\rule[0.5ex]{2em}{1pt}}& $4.16\cdot10^{-4}$\\
 & $0.5$  & {\color{orange}\rule[0.5ex]{2em}{1pt}}  &  $5.09\cdot10^{-4}$\\
& $0.1$  & {\color{my_pink}\rule[0.5ex]{2em}{1pt}}    &   $6.21\cdot10^{-4}$\\
\hline
\multirow{3}{*}{$0.25$~m} & $0.9$  & \tikz[baseline]{\draw[dashed, color=my_g, line width=1pt] (0,0) -- (2em,0);}  &   $4.18\cdot10^{-4}$\\
& $0.5$    & \tikz[baseline]{\draw[dashed, color=orange, line width=1pt] (0,0) -- (2em,0);}  &  $5.11\cdot10^{-4}$\\
 &$0.1$   & \tikz[baseline]{\draw[dashed, color=my_pink, line width=1pt] (0,0) -- (2em,0);}  &   $6.24\cdot10^{-4}$\\
\hline
\multirow{3}{*}{$0.2$~m}& $0.9$    & \tikz[baseline]{\draw[dotted, color=my_g, line width=1pt] (0,0) -- (2em,0);} &  $4.19\cdot10^{-4}$\\
& $0.5$    & \tikz[baseline]{\draw[dotted, color=orange, line width=1pt] (0,0) -- (2em,0);}  &  $5.11\cdot10^{-4}$\\
&$0.1$    & \tikz[baseline]{\draw[dotted, color=my_pink, line width=1pt] (0,0) -- (2em,0);}  &  $6.25\cdot10^{-4}$\\
\hline
\end{tabular}
\label{tab:Values_E_h}
\end{table}

 \vspace{-1ex}
\section{Proposed Approximations}
\label{Sec:SER}

The PDF in \eqref{eq:fh_LN_sim} is not a closed-form expression, since it involves the integral in~\eqref{eq:erfc}. This integral hinders the development of closed-form expressions of related performance metrics such as BER and SER.
In this section, we first present approximate BER expressions, previously proposed in our works~\cite{ICTON,OECC} for OOK signaling over an FSO channel. We then develop a new approximate SER expression for $M$-PAM. 
Furthermore, we derive BER and SER expressions and assess their accuracy with respect to the exact expressions.
Finally, we demonstrate the usefulness of the developed approximations by performing two asymptotic analyses. First, we investigate the additional transmit power required to maintain the same (low) SER when the spectral efficiency increases by $1$ bit/symbol. Second, we study the asymptotic behavior of our SER approximation for dense PAM constellations and high transmit power.

\subsection{Error Probability Preliminaries}
The BER conditioned on $H=h$ for $M$-PAM is defined as
\begin{equation}
    P_{b,M} \triangleq \Pr\{\hat{B}_k\neq B_k|H=h\},
    \label{eq:BER_OOK_conditioned}
\end{equation}
where $\hat{B}_k$ is the estimated bit and $k=1,2,\ldots,m$ is the bit position in the symbol. 

The SER conditioned on $H=h$, is defined as
\begin{align}
    P_{s,M} \triangleq & \Pr \{ \hat{X}\neq X|H=h\}
    \label{eq:SER_conditioned}\\
    \nonumber
    =& \frac{2(M-1)}{M}Q\left( \frac{\eta P h}{(M-1)\sqrt{\sigma_n^2}}\right)\\
    =& \bigg(\frac{M-1}{M}\bigg)\hbox{erfc}\left( \frac{\eta P h}{(M-1)\sqrt{2\sigma_n^2}}\right),
    \label{Eq:CSER_PAM}
\end{align}
where $\hat{X}$ is the estimated symbol. To pass from \eqref{eq:SER_conditioned} to \eqref{Eq:CSER_PAM}, we apply the traditional SER formula for $M$-PAM over AWGN, which uses the minimum distance between symbols and the Gaussian Q-function to model error probability. Incorporating the channel gain $h$ into the traditional SER expression for $M$-PAM leads to \eqref{Eq:CSER_PAM}.


The average BER $\overline{P}_{b,M}$ and the average SER $\overline{P}_{s,M}$ can be obtained by integrating their conditional 
expressions over the PDF in~\eqref{eq:fh_LN_sim}, i.e.,  
\begin{align}
\overline{P}_{b,M}=&\int_{0}^{\infty} P_{b,M}f_H(h)\, dh,
\label{eq:averageBER_PDF}\\
\overline{P}_{s,M}=&\int_{0}^{\infty} P_{s,M}f_H(h)\, dh.
\label{eq:averageSER_PDF}
\end{align}

For $M$-PAM, the average BER can also be approximated as~\cite[eq.~8]{BERAprox}
\begin{equation}
    \overline{P}_{b,M} \approx\frac{\overline{P}_{s,M}}{m},
    \label{eq:BER_MPAM}
\end{equation} 
where $\overline{P}_{s,M}$ is the average SER. The approximation in \eqref{eq:BER_MPAM} is known to be accurate at high SNR when the BRGC we consider in this paper is used.

We conclude this section by showing two different approximations for the \hbox{erfc}($z$) function in \eqref{eq:erfc} that we will use later on. The first approximation is
 \begin{align}
 \hbox{erfc}(z)\approx & 
   \begin{cases}
    \dfrac{2}{\sqrt{\pi}} \dfrac{\exp(-z^2)}{z+\sqrt{z^2+\frac{4}{\pi}}}, & \text{if}~z\geq0.\\ 
      1+\dfrac{\exp{\left(-\frac{2\pi z}{\sqrt{6}}\right)}-1}{\exp{\Big(-\frac{2\pi z}{\sqrt{6}}\Big)}+1}, &  \text{if}~z<0.
     \end{cases}
     \label{eq:Erfc_Aprox2}
 \end{align}
In \eqref{eq:Erfc_Aprox2}, we used the asymptotically tight (as $z\rightarrow \infty$) approximation in \cite[eq.~(7.1.13)]{QFun} for $z \geq 0$. 
For $z<0$, we used the asymptotically tight (as \hbox{$z\rightarrow -\infty$}) approximation given in \cite{QFun2}. These two approximations are shown in Fig.~\ref{FigERFC}, with green and orange circles, resp.
\begin{figure}[t]
        \centering
        \usetikzlibrary{arrows,decorations.markings}
\usetikzlibrary{calc,arrows}

\begin{tikzpicture}[scale=1]    

\definecolor{my_pink}{rgb}{1,0.07,0.65}
\definecolor{my_g}{rgb}{0.09,0.71,0.14}
\definecolor{FEC}{rgb}{0.5, 0.0, 0.13}

\definecolor{fuchsia}{rgb}{0.6,0.4,0.8}

\begin{axis}[ 
     width=0.49\textwidth,
    height=3.1in,   
    xmin=-3, xmax=4, 
   ymin=1e-4, ymax=2.1,
    ymode=log,
    xlabel={$z$}, 
    ylabel shift=-1ex,
    xlabel shift=-1ex,
    grid=major
]

\addplot [color=black,solid,line width=1pt,mark options={solid}]file {./txtData/Erfc_fun/ERFC.txt};\label{Q}

\addplot[only marks, color=blue, mark=*,mark options={solid,scale=0.7}]file{./txtData/Erfc_fun/Aprox.txt};

\addplot[only marks, color=orange, mark=*,mark options={solid,scale=0.7}]file{./txtData/Erfc_fun/Neg.txt};

\addplot[only marks, color=my_g, mark=*,mark options={solid,scale=0.7}]file{./txtData/Erfc_fun/UpperBound.txt};


\addplot [color=red,solid,line width=1pt,mark options={solid}]coordinates {   (0,1e-6) (0,60)};

\end{axis} 

\node at (0,0.12)(input1) {};


\node at ($(input1.south)+(6,5.7)$)[font=\small] {$\dfrac{1}{\sqrt{\pi}z}  \exp{(-z^2)}$}; 
\draw[->] ($(input.south)+(4.8,5.7)$)-- ($(input.south)+(3.85,5.6)$);  

\node at ($(input1.south)+(4.1,0.7)$)[font=\small] {$1+\dfrac{\exp{\left(-\frac{2\pi z}{\sqrt{6}}\right)}-1}{\exp{\left(-\frac{2\pi z}{\sqrt{6}}\right)}+1} $}; 
\draw[->] ($(input.south)+(5.5,1.2)$)-- ($(input.south)+(6,1.6)$);

\node at ($(input1.south)+(1.9,2.9)$)[font=\small] {$\dfrac{2}{\sqrt{\pi}}  \dfrac{\exp{\left(-{z^2}\right)}}{z+\sqrt{z^2+\frac{4}{\pi}}}$}; 
\draw[->] ($(input.south)+(1.4,3.5)$)-- ($(input.south)+(1.15,4.1)$); 

\node [draw,fill=white,anchor= south west,font=\small] at ($(input1.south)+(0.1,0.1)$) {\shortstack[l]{\ref{Q} \hbox{erfc}$(z)$ }};

\node [coordinate](input) {};

\end{tikzpicture}
    \caption{\hbox{Erfc}($z$) and its approximations. The green and orange circles are the two approximations used in \eqref{eq:Erfc_Aprox2}. The blue circles correspond to the approximation in \eqref{eq:erfcAprox_ICTON}. The red vertical line at $z=0$ separates the regions with positive and negative arguments of \hbox{Erfc}($z$).}
    \label{FigERFC}
\end{figure}
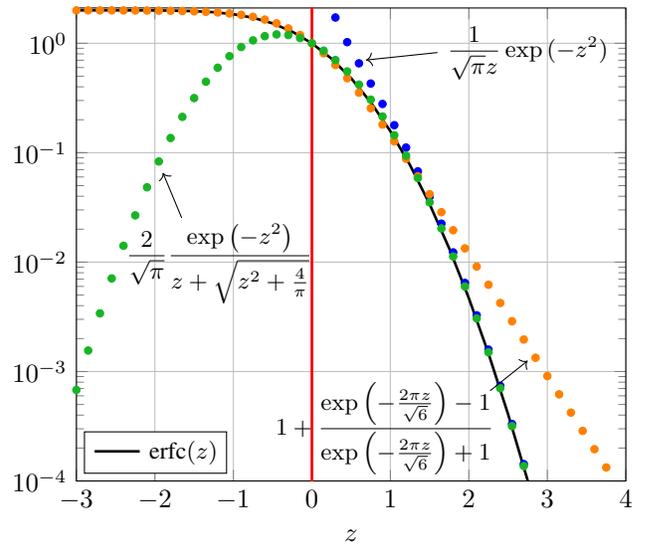

The second approximation is \cite[eq.~(6)]{erfc_Approx}
\begin{equation}
    \hbox{erfc}(z) \approx \frac{1}{z\sqrt{\pi}}\exp{(-z^2)}.
    \label{eq:erfcAprox_ICTON}
\end{equation}
The approximation in \eqref{eq:erfcAprox_ICTON} is also asymptotically tight (as \hbox{$z\rightarrow \infty$}). This approximation is shown in Fig.~\ref{FigERFC} with blue circles.
 
\subsection{Bit Error Rate: OOK}\label{BER_OOK}

For OOK signaling ($m=1$, $M=2$, and $\mathcal{X}=\{0,2P\}$), \eqref{eq:BER_OOK_conditioned} can be expressed as
\begin{equation}
    P_{b,2}=Q\left( \frac{\eta P h}{\sqrt{\sigma_n^2}} \right)=\frac{1}{2}\hbox{erfc}\biggl( \frac{\eta P h}{\sqrt{2\sigma_n^2}}\biggr).
    \label{Eq:CBER}
\end{equation}
We showed in \cite[eq.~(16)]{ICTON} that the average BER in \eqref{eq:averageBER_PDF} can be expressed as
\begin{IEEEeqnarray}{cl}
    \nonumber
&\overline{P}_{b,2}=\frac{\gamma^2}{\pi(h_g h_l \kappa)^{\gamma^2}}\exp{\left[ \gamma^2\sigma_R^2 \left(1+\gamma^2/2\right)\right]}\cdot\\
      &\int_{0}^{\infty}  h^{\gamma^2 -1} \left[\int_{uh}^{\infty} \hspace{-0.1cm}\exp{(-t^2)} \, dt \int_{v}^{\infty} \hspace{-0.1cm}\exp{(-t^2)}\, dt\right]\,\hspace{-0.15cm} dh,
        \label{eq:ExactOOKBER}
\end{IEEEeqnarray}
where $v$ is given in \eqref{psi}, and 
\begin{equation}
    u\triangleq \frac{\eta P}{\sqrt{2\sigma_n^2}}=\frac{A}{h\sqrt{8}}.
    \label{eq:u}
\end{equation}
The values of $u$ and $A=2P \eta h/ \sigma_n$ can be interpreted as variables that provide information about the quality of the transmission. 

From~\eqref{eq:ExactOOKBER}, we can see that this exact BER expression involves the integral of the product of two integrals, which is complex to evaluate.
In~\cite[eq.~(4)]{OECC}, we proposed the following BER approximation for OOK,  
\begin{align} 
        \nonumber
&\overline{P}_{b,2}\approx\frac{\gamma^2}{2\sqrt{\pi}(h_g h_l \kappa)^{\gamma^2}}\exp{\left[ \gamma^2\sigma_R^2 \left(1+\gamma^2/2\right)\right]}\cdot \hspace{1cm} \\
      &\Bigg[\int_{0}^{\hat{h}} 
      h^{\gamma^2 -1} \Biggl[1+ \dfrac{\exp{\left(\frac{-2\pi v}{\sqrt{6}}\right)}-1}{\exp{\left(\frac{-2\pi v}{\sqrt{6}}\right)}+1} \Biggr] 
      \frac{\exp{\left[-\left(uh\right)^2\right]}}{\left(uh\right)+\sqrt{\left(uh\right)^2+\frac{4}{\pi}}}\, dh \nonumber\\
      &+ \frac{2}{\sqrt{\pi}} \int_{\hat{h}}^{\infty} h^{\gamma^2 -1} \frac{\exp{\left(- v^2\right)} }{ v +\sqrt{ v^2 + \frac{4}{\pi}}} \frac{\exp{\left[-\left(uh\right)^2\right]}}{\left(uh\right)+\sqrt{\left(uh\right)^2+\frac{4}{\pi}}} \, dh \Biggr],
    \label{eq:BER_Approx2_UpperBound}
 \end{align}
where $\hat{h}\triangleq h_gh_l \kappa \exp{(-\mu)}$, $h_{l}$ is given in \eqref{eq:hl}, $h_{g}$ in \eqref{eq:hg} and $\mu$ in \eqref{eq:mu}. The expression in~\eqref{eq:BER_Approx2_UpperBound} is obtained by applying the approximation in \eqref{eq:Erfc_Aprox2} to the exact BER \eqref{eq:ExactOOKBER}.

Since~\eqref{eq:BER_Approx2_UpperBound} only involves two one-dimensional integrals, it is clear that it is easier to compute than \eqref{eq:ExactOOKBER}. An even simpler approximation can be obtained by using~\eqref{eq:erfcAprox_ICTON} in the exact BER \eqref{eq:ExactOOKBER}. This was indeed the approach taken in \cite{ICTON}, which leads to
\begin{IEEEeqnarray}{cc}
\nonumber
&\overline{P}_{b,2}\approx\frac{\gamma^2\sigma_R \sigma_n}{2\eta P\pi(h_g h_l \kappa)^{\gamma^2}} \exp{\left[ \gamma^2\sigma_R^2 \left(1+\gamma^2/2\right)\right]}\cdot\\
&\int_{\hat{h}}^{\infty} \frac{h^{\gamma^2 -2}}{\hbox{ln}(\frac{h}{h_gh_l\kappa})+\mu} \exp{\left[-\left(uh\right)^2\right]} \exp{\left(-v^2\right)}  \, dh.
\label{eq:BER_Aprox2_ICTON}
\end{IEEEeqnarray}
The expression in \eqref{eq:BER_Aprox2_ICTON} involves only a single one-dimensional integral. However, as we will see in Sec.~\ref{sec:results},~\eqref{eq:BER_Approx2_UpperBound} is significantly more accurate than~\eqref{eq:BER_Aprox2_ICTON}, creating a trade-off between complexity and accuracy. 

\subsection{Bit Error Rate: $M$-PAM}
The BER expressions~\eqref{eq:ExactOOKBER},~\eqref{eq:BER_Approx2_UpperBound}, and~\eqref{eq:BER_Aprox2_ICTON} are only valid for OOK. In this paper, we are interested in computing the BER for $M$-PAM. Instead of computing~\eqref{eq:averageBER_PDF}, in this section we will show that because of \eqref{eq:BER_MPAM}, computing $\overline{P}_{s,M}$ is enough. To demonstrate the accuracy of \eqref{eq:BER_MPAM}, we analyze the error probability of $8$-PAM and $16$-PAM. Deriving the BER expression is not done for $M>16$ as it becomes too tedious.

For simplicity, instead of considering the average BER expression $\overline{P}_{b,M}$, we present here the analysis based on the BER conditioned on \mbox{$H=h$}. The conditional BER in \eqref{eq:BER_OOK_conditioned} for $8$-PAM and $16$-PAM are given by
\begin{align}
\nonumber
    P_{b,8} =& \frac{1}{12}\Bigg[ 7Q\left( \frac{A}{14} \right) +6 Q \left( \frac{3A}{14} \right)- \\ 
    &Q\left( \frac{5A}{14} \right) +Q \left( \frac{9A}{14} \right) - Q \left( \frac{13A}{14} \right)\Bigg],
    \label{TrueBER_8PAM}
\end{align}
and 
\begin{align}
\nonumber
    &P_{b,16} = \frac{1}{32}\Bigg[ 15Q\left( \frac{A}{30} \right) +14 Q \left( \frac{3A}{30} \right) -Q\left( \frac{5A}{30} \right)+\\\nonumber
    &\sum_{k=0}^2 \frac{5}{(-1)^k}Q\left( \frac{(4k+9)A}{30}\right)+
    \frac{4}{(-1)^k} Q\left( \frac{(4k+11)A}{30}\right)+\\ &\sum_{k=0}^1 (-1)^k Q\left( \frac{(4k+25)A}{30}\right) -3Q\left( \frac{21A}{30} \right)-2Q\left( \frac{23A}{30} \right)\Bigg],
    \label{TrueBER_16PAM}
\end{align}
resp. For $8$-PAM and $16$-PAM, the approximation~\eqref{eq:BER_MPAM} yields
\begin{equation}
    P_{b,8}\approx\hat{P}_{b,8}=\frac{P_{s,8}}{3}=\frac{1}{12}\left[ 7Q\left( \frac{A}{30} \right)\right],
    \label{AproxBER_8PAM}
\end{equation}
and
\begin{equation}\label{AproxBER_16PAM}
 P_{b,16}\approx\hat{P}_{b,16}=\frac{P_{s,16}}{4}=\frac{1}{32}\left[ 5Q\left( \frac{A}{30} \right)\right]. 
\end{equation}
In \eqref{AproxBER_8PAM} and \eqref{AproxBER_16PAM}, $\hat{P}_{b,M}$ denotes the approximated conditional BER. Furthermore, in \eqref{AproxBER_8PAM} and \eqref{AproxBER_16PAM}, we use the SER expression for $M$-PAM given in \eqref{Eq:CSER_PAM}.

In the high SNR regime, which corresponds to the case $A \to \infty$ (see \eqref{eq:u}), we have that
\begin{equation}\label{eq:lim_HighSNR}
    \lim_{A \to \infty} \frac{P_{b,8}}{\hat{P}_{b,8}}=\lim_{A \to \infty} \frac{P_{b,16}}{\hat{P}_{b,16}}=1,
\end{equation} 
which follows from considering the dominant Q-functions in \eqref{TrueBER_8PAM} and \eqref{TrueBER_16PAM}.
The result in \eqref{eq:lim_HighSNR} shows that the approximations $\hat{P}_{b,M}$ are expected to be accurate, especially at high SNR.

In Fig. \ref{FigBER_MPAM}, we compare the exact BER in \eqref{eq:averageBER_PDF} and the BER approximation \eqref{eq:BER_MPAM} for $8$-PAM and $16$-PAM. 
More specifically, we use the conditional BERs in \eqref{TrueBER_8PAM} and \eqref{TrueBER_16PAM} to compute the exact average BER $\overline{P}_{b,M}$ for $8$-PAM and $16$-PAM, resp. Additionally, we use the conditional BER approximations in \eqref{AproxBER_8PAM} and~\eqref{AproxBER_16PAM}, to compute the approximated average BERs for $8$-PAM and $16$-PAM signaling, resp.
Blue and gray curves represent $8$-PAM and $16$-PAM, resp. Solid curves correspond to the exact BER while dotted curves correspond to the approximate average BER. Out of the nine operating points shown in Fig.~\ref{FigPDF}, three representative cases have been selected for further analysis.
These are: (i) $\sigma_s = 0.35$~m and $\sigma_R^2 = 0.1$, (ii) $\sigma_s = 0.25$~m and $\sigma_R^2 = 0.5$, and (iii) $\sigma_s = 0.2$~m and $\sigma_R^2 = 0.9$. These operating points will be used throughout the remainder of the paper.
From this figure, we can see that the approximation is in close agreement with the exact BER as the transmitted power increases.

The rightmost curves in Fig. \ref{FigBER_MPAM} correspond to the most severe turbulence ($\sigma_R^2=0.9$). This case is associated with the operating point that exhibits the smallest mean channel gain $\mathbb{E}[H]$ (see Table~\ref{tab:Values_E_h}). Therefore, for the same transmitted power, this operating point has the lowest optical SNR (see \eqref{SNR_op_av}). The approximation in \eqref{eq:BER_MPAM} is asymptotically tight for high SNR, as shown in \eqref{eq:lim_HighSNR}, which explains why, for this operating point, the solid and dotted curves begin to converge at higher transmitted power values.  

\begin{figure}[t]
        \centering
            \begin{tikzpicture}[scale=1]    

\definecolor{my_pink}{rgb}{1,0.07,0.65}
\definecolor{my_g}{rgb}{0.09,0.71,0.14}
\definecolor{FEC}{rgb}{0.5, 0.0, 0.13}
 \definecolor{green2}{rgb}{0.0, 0.5, 0.5}
\definecolor{fuchsia}{rgb}{0.6,0.4,0.8}
\definecolor{my_purple}{rgb}{0.5,0,0.5}

\begin{semilogyaxis}[ 
    width=0.49\textwidth,
    height=3.5in,   
    xmin=0, xmax=30,
    ymin=1e-6,
    ymax=6e-1,  
    xlabel={Transmitted Power, $P$~[dBm]}, 
    ylabel shift=-4ex,
    xlabel shift=-4ex,
    font=\small,
    ylabel={Average BER,~$\overline{P}_{b,M}$}, 
    xtick={0,5,...,30},    
    grid=major,
    grid style={dashed,lightgray!75},
]

\addplot [color=black,solid,line width=1pt,mark options={solid}]coordinates {
         (10,1) (10,2)};\label{Orig_var1_jit12_black}
\addplot [color=black,dashed,line width=1pt,mark options={solid}]coordinates {
         (10,1) (10,2)};\label{Dashed_black}
\addplot [color=black,dotted,line width=1.2pt,mark options={solid}]coordinates {(10,1) (10,2)};\label{dotted_black}

  

\addplot [color=gray,solid,line width=1pt,mark options={solid}]file{./txtData/16PAM_BER/True_varR09_jit2.txt};\label{C16PAM}

\addplot [ color=gray,dotted,line width=1.2pt,mark options={solid}] file {./txtData/16PAM_BER/Aprox_varR09_jit2.txt};

    
    \addplot [color=blue!70,solid,line width=1pt,mark options={solid}]file{./txtData/8PAM_BER/True_varR09_jit2.txt};\label{C8PAM}
    
    \addplot [ color=blue!70,dotted,line width=1.2pt,mark options={solid}] file {./txtData/8PAM_BER/Aprox_varR09_jit2.txt};


\addplot [color=gray,solid,line width=1pt,mark options={solid}]file{./txtData/16PAM_BER/True_varR01_jit35.txt};

\addplot [ color=gray,dotted,line width=1.2pt,mark options={solid}] file{./txtData/16PAM_BER/Aprox_varR01_jit35.txt};

\addplot [color=blue!70,solid,line width=1pt,mark options={solid}]file{./txtData/8PAM_BER/True_varR01_jit35.txt};

\addplot [ color=blue!70,dotted,line width=1.2pt,mark options={solid}] file{./txtData/8PAM_BER/Aprox_varR01_jit35.txt};


\addplot [color=gray,solid,line width=1pt,mark options={solid}] file {./txtData/16PAM_BER/True_varR05_jit25.txt};

\addplot [color=gray,dotted,line width=1pt,mark options={solid}] file{./txtData/16PAM_BER/Aprox_varR05_jit25.txt};

\addplot [color=blue!70,solid,line width=1pt,mark options={solid}] file {./txtData/8PAM_BER/True_varR05_jit25.txt};

\addplot [color=blue!70,dotted,line width=1pt,mark options={solid}] file{./txtData/8PAM_BER/Aprox_varR05_jit25.txt};

\end{semilogyaxis} 

\node [coordinate](input) {};

\node [draw,color=my_pink, fill=white,anchor= south west,font=\scriptsize,text=black] at (1.2,3.4 em) {\shortstack[l]{ 
$\sigma_s=0.35$ m \\
$\sigma_R^2=0.1$}};

\node [draw,color=my_g, fill=white,anchor= south 
west,font=\scriptsize,text=black] at (5.7,8.8em) {\shortstack[l]{ 
$\sigma_s=0.2$ m \\
$\sigma_R^2=0.9$}};

\node [draw,color=orange, fill=white,anchor= south 
west,font=\scriptsize,text=black] at (1.4,0.4 em) {\shortstack[l]{ 
$\sigma_s=0.25$ m \\
$\sigma_R^2=0.5$}};

\draw[stealth-,solid, black,line width=0.7pt,rounded corners] ($(input.north)+(4.65,2.8em)$) -- ($(input.north)+(3.25,1.3em)$);

\draw[stealth-,solid, black,line width=0.7pt,rounded corners]($(input.north)+(3.8,16.5em)$) -- ($(input.north)+(1.5,15.5em)$);
\node at ($(input.north) + (4.5, 17 em)$) [scale=1.2, font=\scriptsize] {Increasing};
\node at ($(input.north) + (4.5, 16.3 em)$) [scale=1.2, font=\scriptsize] {Turbulence};


\node[ellipse, color=my_pink,line width=0.7pt, draw,minimum width = 1.1cm, 
	minimum height = 0.45cm, rotate=45] (e) at (3.45,1.8) {};

\node[ellipse, color=my_g,line width=0.7pt, draw,minimum width = 1cm, 
	minimum height = 0.45cm, rotate=45] (e) at (5.7,2.6) {};

\node[ellipse, color=orange,line width=0.7pt, draw,minimum width = 1cm, 
	minimum height = 0.45cm, rotate=45] (e) at (5.1,1.35) {};

\node [draw,fill=white,anchor= south west,font=\scriptsize] at (5.28,6.55) {\shortstack[l]{ 
\ref{C8PAM} $M=8$\\
\ref{C16PAM} $M=16$}};

\node [draw,fill=white,anchor= south west,font=\scriptsize] at (5.82,5.7) {\shortstack[l]{ 
\ref{Orig_var1_jit12_black} \eqref{eq:averageBER_PDF} \\
\ref{dotted_black} \eqref{eq:BER_MPAM}}};

\end{tikzpicture}
        \vspace{-2ex}
    \caption{Average BER in \eqref{eq:averageBER_PDF} (exact) and \eqref{eq:BER_MPAM} (approximation) versus transmit power for $8$-PAM and $16$-PAM, and three different cases for ($\sigma_s$, $\sigma_R^2$): ($0.2,0.9$), ($0.25,0.5$), and ($0.35,0.1$). Table~\ref{tab:System_params} summarizes the simulation parameters considered (Sec.~\ref{sec:results}, ahead).}
    \label{FigBER_MPAM}
\end{figure}
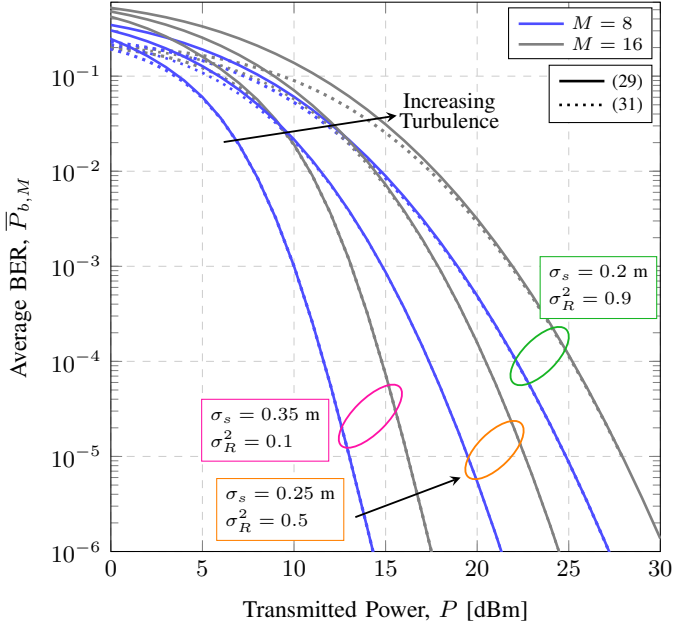


\subsection{Symbol Error Rate: $M$-PAM}
\label{SER_sub}

Here, we derive the exact SER expression and develop novel approximations for $M$-PAM for the composite channel model in \eqref{eq:fh_LN_sim}. In the following theorem, we present the exact average SER. To the best of our knowledge, this expression has not been derived in the literature for $M$-PAM over a composite fading channel like the one considered in this work. 

\begin{theorem}[\textit{Exact SER}]
\label{th:SER_exact}
The exact average SER for $M$-PAM over a composite fading channel characterized by the PDF in \eqref{eq:fh_LN_sim} is given by
\begin{align}
  \nonumber
     &\overline{P}_{s,M}=\frac{2(M-1)\gamma^2}{\pi M(h_g h_l \kappa)^{\gamma^2}}\exp{\left[ \gamma^2\sigma_R^2 \left(1+\gamma^2/2\right)\right]}\cdot \hspace{4cm}\\ &\int_{0}^{\infty}  h^{\gamma^2 -1} \left[ \int_{u h/(M-1)}^{\infty} \exp{\left(-t^2 \right)} dt  \int_{v}^{\infty}  \exp{\left(-t^2 \right)} dt \right] dh, \hspace{2.5cm}
     \label{eq:SER_exact_PAM}
\end{align}
where $v$ and $u$ are as given in \eqref{psi} and \eqref{eq:u}, resp.
\end{theorem}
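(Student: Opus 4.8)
The plan is to prove the statement by direct substitution, since by \eqref{eq:averageSER_PDF} the average SER is simply the conditional SER $P_{s,M}$ averaged against the composite-channel PDF $f_H(h)$. First I would insert the conditional SER from \eqref{Eq:CSER_PAM} together with the PDF from \eqref{eq:fh_LN_sim} into \eqref{eq:averageSER_PDF}. This produces a single integral over $h$ whose integrand is the product of two complementary error functions, namely $\hbox{erfc}\bigl(\eta P h/[(M-1)\sqrt{2\sigma_n^2}]\bigr)$ coming from the conditional SER and $\hbox{erfc}(v)$ coming from the PDF, multiplied by the algebraic factor $h^{\gamma^2-1}$ and the various $h$-independent constants.

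Next I would rewrite the argument of the first complementary error function using the definition of $u$ in \eqref{eq:u}. Since $u=\eta P/\sqrt{2\sigma_n^2}$, that argument equals $uh/(M-1)$, which is the identification that matches the form claimed in \eqref{eq:SER_exact_PAM}. I would then expand both complementary error functions through the integral representation in \eqref{eq:erfc}, writing each $\hbox{erfc}(z)$ as $(2/\sqrt{\pi})\int_z^{\infty}\exp(-t^2)\,dt$. Because the limit $uh/(M-1)$ and the argument $v$ in \eqref{psi} both depend on $h$, the resulting product of the two tail integrals $\int_{uh/(M-1)}^{\infty}\exp(-t^2)\,dt$ and $\int_{v}^{\infty}\exp(-t^2)\,dt$ must remain inside the outer $h$-integral, exactly as in the bracket of \eqref{eq:SER_exact_PAM}.

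Finally I would collect the multiplicative constants. The prefactor $(M-1)/M$ from \eqref{Eq:CSER_PAM}, the factor $1/2$ from the PDF \eqref{eq:fh_LN_sim}, and the two factors of $2/\sqrt{\pi}$ from the two erfc expansions combine as $\tfrac{M-1}{M}\cdot\tfrac{1}{2}\cdot\tfrac{4}{\pi}=\tfrac{2(M-1)}{\pi M}$. Multiplying by the remaining $h$-independent PDF constants $\gamma^2/(h_g h_l\kappa)^{\gamma^2}$ and $\exp[\gamma^2\sigma_R^2(1+\gamma^2/2)]$, which factor out of the integral, reproduces the stated prefactor and completes the derivation. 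The main obstacle here is not conceptual but purely bookkeeping: the identity follows from a plain substitution, so the only places to slip are in the accounting of these constants and in correctly recognizing the argument of the first erfc as $uh/(M-1)$. No interchange of limits, convergence argument, or special-function identity beyond \eqref{eq:erfc} is required.
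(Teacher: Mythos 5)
Your proposal is correct and follows exactly the paper's own (one-line) proof: substitute the conditional SER \eqref{Eq:CSER_PAM} and the PDF \eqref{eq:fh_LN_sim} into \eqref{eq:averageSER_PDF}, expand both complementary error functions via \eqref{eq:erfc}, and collect constants. Your bookkeeping of the prefactor $\tfrac{M-1}{M}\cdot\tfrac{1}{2}\cdot\tfrac{4}{\pi}=\tfrac{2(M-1)}{\pi M}$ and the identification of the first argument as $uh/(M-1)$ are both accurate.
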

\begin{IEEEproof}
   By using \eqref{eq:fh_LN_sim}, \eqref{eq:erfc}, and \eqref{Eq:CSER_PAM} in \eqref{eq:averageSER_PDF}.
\end{IEEEproof}

Similar to~\eqref{eq:ExactOOKBER}, the exact SER expression given in Theorem~\ref{th:SER_exact} again involves the integral of the product of two integrals, which is complex to evaluate.
In order to simplify the computation of \eqref{eq:SER_exact_PAM}, the following theorem gives an SER approximation that only involves two one-dimensional integrals.  

\begin{theorem}[\textit{Approximated SER for $M$-PAM}]
The average SER in \eqref{eq:SER_exact_PAM} can be approximated as
     \begin{align}  
\nonumber
      &\overline{P}_{s,M}\approx\frac{(M-1)\gamma^2}{M\sqrt{\pi}(h_g h_l \kappa)^{\gamma^2}}\exp{\left[ \gamma^2\sigma_R^2 \left(1+\gamma^2/2\right)\right]}  \Bigg[\int_{0}^{\hat{h}} \hspace{-0.2cm} 
      h^{\gamma^2 -1}  \cdot \hspace{2.5cm}\\ &\nonumber \left[1+\dfrac{\exp{\left(\frac{-2\pi v }{\sqrt{6}}\right)}-1}{\exp{\left(\frac{-2\pi v }{\sqrt{6}}\right)}+1} \right] 
      \frac{\exp{\left[-\left(\frac{uh}{M-1}\right)^2\right]}}{\left(\frac{uh}{M-1}\right)+\sqrt{\left(\frac{uh}{M-1}\right)^2+\frac{4}{\pi}}}\, dh \hspace{2cm} \\ &+\hspace{-0.1cm} \frac{2}{\sqrt{\pi}} \hspace{-0.12cm} \int_{\hat{h}}^{\infty}\hspace{-0.2cm} h^{\gamma^2 -1} \frac{\exp{(- v^2)} }{ v+\sqrt{ v^2 + \frac{4}{\pi}}} \frac{\exp{\left[-\left(\frac{uh}{M-1}\right)^2\right]}}{\left(\frac{uh}{M-1}\right)+\sqrt{\left(\frac{uh}{M-1}\right)^2+\frac{4}{\pi}}} \, dh\Biggr]. \hspace{2ex}
     \label{eq:SER_Approx2_UpperBound}
 \end{align}
 \label{Theorem:SER_Aprox}
\end{theorem}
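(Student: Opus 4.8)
The plan is to start from the exact expression in \eqref{eq:SER_exact_PAM} and replace each of the two inner integrals by a tractable elementary approximation of the complementary error function, mirroring exactly the OOK derivation that produced \eqref{eq:BER_Approx2_UpperBound} from \eqref{eq:ExactOOKBER}. First I would rewrite the inner integrals using the definition in \eqref{eq:erfc}, so that $\int_{uh/(M-1)}^{\infty}\exp(-t^2)\,dt=\tfrac{\sqrt{\pi}}{2}\,\mathrm{erfc}\!\left(\tfrac{uh}{M-1}\right)$ and $\int_{v}^{\infty}\exp(-t^2)\,dt=\tfrac{\sqrt{\pi}}{2}\,\mathrm{erfc}(v)$. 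Substituting these back turns the prefactor $\tfrac{2(M-1)\gamma^2}{\pi M(h_gh_l\kappa)^{\gamma^2}}\exp[\cdots]$ into $\tfrac{(M-1)\gamma^2}{2M(h_gh_l\kappa)^{\gamma^2}}\exp[\cdots]$ and leaves the single outer integral $\int_{0}^{\infty}h^{\gamma^2-1}\,\mathrm{erfc}\!\left(\tfrac{uh}{M-1}\right)\mathrm{erfc}(v)\,dh$.

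The second step is to apply the piecewise approximation \eqref{eq:Erfc_Aprox2} to each $\mathrm{erfc}$ factor separately. Since $u>0$, $h>0$ and $M\geq 2$, the argument $uh/(M-1)$ is nonnegative over the entire integration range, so the $z\geq 0$ branch of \eqref{eq:Erfc_Aprox2} applies uniformly to the first factor, contributing $\tfrac{2}{\sqrt{\pi}}\,\exp[-(uh/(M-1))^2]\big/\bigl((uh/(M-1))+\sqrt{(uh/(M-1))^2+4/\pi}\bigr)$. The argument $v$ in \eqref{psi}, however, changes sign, so the branch selected for $\mathrm{erfc}(v)$ depends on $h$; I would therefore split $\int_{0}^{\infty}$ at the crossover $\hat{h}$.

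The key structural observation is that $v$ in \eqref{psi} is an affine, hence strictly increasing, function of $\ln h$, and that $v=0$ precisely when $\ln(h/(h_lh_g\kappa))=-\mu$, i.e. at $h=h_lh_g\kappa\,e^{-\mu}=\hat{h}$. Hence $v<0$ on $(0,\hat{h})$ and $v>0$ on $(\hat{h},\infty)$. On the lower interval I apply the $z<0$ branch of \eqref{eq:Erfc_Aprox2}, producing the logistic term $1+(\exp(-2\pi v/\sqrt6)-1)/(\exp(-2\pi v/\sqrt6)+1)$; on the upper interval I apply the $z\geq 0$ branch, producing a second $\tfrac{2}{\sqrt{\pi}}$-scaled factor. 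Collecting constants, the $\tfrac{1}{2M}\cdots$ prefactor times the single $\tfrac{2}{\sqrt{\pi}}$ coming from the $uh/(M-1)$ factor yields the common constant $\tfrac{(M-1)\gamma^2}{M\sqrt{\pi}(h_gh_l\kappa)^{\gamma^2}}\exp[\cdots]$ of \eqref{eq:SER_Approx2_UpperBound}, while the extra $\tfrac{2}{\sqrt{\pi}}$ sitting inside the second integral is exactly the one contributed by the $z\geq 0$ approximation of $\mathrm{erfc}(v)$ on $(\hat{h},\infty)$. This reproduces the claimed two-integral form.

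The derivation is essentially bookkeeping and contains no genuinely hard analytic step; the two points requiring care are the sign analysis of $v$ and the verification that $\hat{h}$ is the exact crossover, since selecting the wrong branch of \eqref{eq:Erfc_Aprox2} on either subinterval would degrade the approximation, and the tracking of the $\tfrac{\sqrt{\pi}}{2}$ rescalings so that the final prefactors match \eqref{eq:SER_Approx2_UpperBound}. A useful consistency check is that for $M=2$ the argument $u/(M-1)$ reduces to $u$ and the factor $(M-1)/M$ reduces to $1/2$, so \eqref{eq:SER_Approx2_UpperBound} collapses exactly to the OOK approximation \eqref{eq:BER_Approx2_UpperBound}, as it must since SER equals BER for OOK.
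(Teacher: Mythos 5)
Your proposal is correct and follows exactly the route the paper takes: its proof of Theorem~\ref{Theorem:SER_Aprox} is the one-line statement ``by using the approximation \eqref{eq:Erfc_Aprox2} in \eqref{eq:SER_exact_PAM},'' and your write-up simply makes explicit the steps implied there (converting the inner integrals to $\mathrm{erfc}$ factors, splitting at the sign change $v=0$ at $h=\hat{h}$, and tracking the $\sqrt{\pi}/2$ factors). The sign analysis of $v$, the identification of $\hat{h}$ as the crossover, and the $M=2$ consistency check with \eqref{eq:BER_Approx2_UpperBound} are all sound.
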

\begin{IEEEproof}
    By using the approximation \eqref{eq:Erfc_Aprox2} in \eqref{eq:SER_exact_PAM}.
\end{IEEEproof}

Since \eqref{eq:SER_Approx2_UpperBound} only involves two one-dimensional integrals, it is easier to compute than \eqref{eq:SER_exact_PAM}. In Sec.~\ref{sec:results}, we will show how accurate this approximation is. 

\subsection{Symbol Error Rate: Dense PAM Constellations}

An $M$-PAM constellation is said to be \textit{dense} if the number of constellation points $M$ is much larger than 2 ($M\gg2$). Dense $M$-PAM constellations have tightly spaced amplitude levels within the values $0$ and $2P$. Dense $M$-PAM constellations are useful for analyzing certain asymptotic behaviors, including for example obtaining the so-called ultimate shaping gain \cite{AnalysisDensePAM}. In this section we present some results on dense $M$-PAM constellations for the composite channel model in \eqref{eq:fh_LN_sim}.

Building on Theorem~\ref{Theorem:SER_Aprox}, the next corollary presents an approximation for the average SER for dense $M$-PAM constellations. Its proof follows directly by replacing \hbox{$M-1$} by $M$ in \eqref{eq:SER_Approx2_UpperBound}.

\begin{corollary}[\textit{Approximated SER for Dense $M$-PAM}]
The average SER for dense $M$-PAM constellations can be approximated as
    \begin{align}  
\nonumber
      &\overline{P}_{s,M}\approx\frac{\gamma^2}{\sqrt{\pi}(h_g h_l \kappa)^{\gamma^2}}\exp{\left[ \gamma^2\sigma_R^2 \left(1+\gamma^2/2\right)\right]}  \Bigg[\int_{0}^{\hat{h}} \hspace{-0.2cm} 
      h^{\gamma^2 -1}  \cdot \hspace{2.5cm}\\ \nonumber &\left[1+\dfrac{\exp{\left(\frac{-2\pi v }{\sqrt{6}}\right)}-1}{\exp{\left(\frac{-2\pi v }{\sqrt{6}}\right)}+1} \right] 
      \frac{\exp{\left[-\left(\frac{uh}{M}\right)^2\right]}}{\left(\frac{uh}{M}\right)+\sqrt{\left(\frac{uh}{M}\right)^2+\frac{4}{\pi}}}\, dh +\\ 
      & \frac{2}{\sqrt{\pi}} \int_{\hat{h}}^{\infty} h^{\gamma^2 -1} \frac{\exp{(- v^2)} }{ v+\sqrt{ v^2 + \frac{4}{\pi}}} \frac{\exp{\left[-\left(\frac{uh}{M}\right)^2\right]}}{\left(\frac{uh}{M}\right)+\sqrt{\left(\frac{uh}{M}\right)^2+\frac{4}{\pi}}} \, dh\Biggr]. 
     \label{eq:SER_DensePAM}
 \end{align}
 \label{Theorem:SER_densePAM}
\end{corollary}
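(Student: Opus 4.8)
The plan is to obtain \eqref{eq:SER_DensePAM} as the $M\gg2$ specialization of Theorem~\ref{Theorem:SER_Aprox}. First I would return to the conditional SER in \eqref{Eq:CSER_PAM}, which carries the multiplicative factor $(M-1)/M$ and the argument $\eta P h/\bigl((M-1)\sqrt{2\sigma_n^2}\bigr)$. The defining property of a dense constellation is $M\gg2$, so that $M-1\approx M$, and I would apply this single approximation simultaneously in two places: to the prefactor, giving $(M-1)/M\approx1$, and to the denominator inside the complementary error function, giving $\eta P h/\bigl((M-1)\sqrt{2\sigma_n^2}\bigr)\approx \eta P h/\bigl(M\sqrt{2\sigma_n^2}\bigr)$. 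Expressed through the variable $u$ in \eqref{eq:u}, this is simply the replacement of $uh/(M-1)$ by $uh/M$ throughout.

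Next I would propagate this substitution through the derivation that produced Theorem~\ref{Theorem:SER_Aprox}. That theorem was obtained by inserting the \hbox{erfc} approximation \eqref{eq:Erfc_Aprox2} into the exact SER integral \eqref{eq:SER_exact_PAM}; the modulation order $M$ enters the integrand only through the argument $uh/(M-1)$, while the global factor $(M-1)/M$ sits outside the integral. Consequently the dense-PAM version follows mechanically: replace $(M-1)/M$ by $1$ in the leading constant and replace every occurrence of $uh/(M-1)$ by $uh/M$ in the two one-dimensional integrals. This yields \eqref{eq:SER_DensePAM}, with the split at $\hat{h}$, the two sign-dependent branches of \eqref{eq:Erfc_Aprox2}, and the factor $\exp\!\left[\gamma^2\sigma_R^2\left(1+\gamma^2/2\right)\right]$ all inherited unchanged from Theorem~\ref{Theorem:SER_Aprox}.

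There is essentially no analytical obstacle here, since the argument is a clean substitution rather than a new bound or estimate; the only point requiring care is consistency. I would check that the approximation $M-1\approx M$ is applied uniformly—to the prefactor and to the \hbox{erfc} argument at the same time—so that no residual $M/(M-1)$ factor survives and the prefactor collapses exactly to $\gamma^2/\bigl(\sqrt{\pi}(h_g h_l \kappa)^{\gamma^2}\bigr)$. The main thing worth remarking on, rather than proving, is the regime of validity: the replacement introduces a relative error of order $1/M$, which is negligible precisely in the dense regime $M\gg2$ that defines the corollary, so \eqref{eq:SER_DensePAM} inherits the accuracy of Theorem~\ref{Theorem:SER_Aprox} while becoming independent of the $-1$ offset in the constellation spacing.
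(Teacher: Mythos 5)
Your proposal is correct and matches the paper's own argument, which simply replaces $M-1$ by $M$ in \eqref{eq:SER_Approx2_UpperBound}, i.e., the prefactor $(M-1)/M$ becomes $1$ and every occurrence of $uh/(M-1)$ becomes $uh/M$. Your additional remarks on the uniform application of $M-1\approx M$ and the $O(1/M)$ relative error are consistent with, and slightly more explicit than, the paper's one-line proof.
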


In a communication system using $M$-PAM, one way to improve the spectral efficiency by $1$ bit/symbol is to double the modulation order $M$. However, to maintain the same SER, the system would require a higher transmit power. 
This increase in power has significant implications for the design of FSO systems, as available power is limited due to eye safety regulations. Consequently, this factor must be carefully considered when designing FSO links.
The following corollary quantifies the additional power required for dense $M$-PAM constellations when
the spectral efficiency is increased by $1$ bit/symbol while maintaining the same SER.

\begin{corollary}[\textit{Doubling Power for Dense $M$-PAM}]
\label{Corollary1}
For dense $M$-PAM constellations and any given SER, doubling the transmitted power allows an increase in spectral efficiency of 1~bit/symbol.
\end{corollary}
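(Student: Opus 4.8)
The plan is to read off, from the dense-PAM SER approximation in Corollary~\ref{Theorem:SER_densePAM}, that the transmit power $P$ and the modulation order $M$ enter only through their ratio $P/M$. Concretely, in \eqref{eq:SER_DensePAM} the prefactor $\frac{\gamma^2}{\sqrt{\pi}(h_g h_l \kappa)^{\gamma^2}}\exp[\gamma^2\sigma_R^2(1+\gamma^2/2)]$ depends on neither $P$ nor $M$, and within the integrals the quantity $v$ of \eqref{psi} depends only on $h$ and the channel parameters. The sole remaining occurrence of $P$ and $M$ is through the argument $uh/M$; recalling $u=\eta P/\sqrt{2\sigma_n^2}$ from \eqref{eq:u}, this argument equals $\frac{\eta P h}{M\sqrt{2\sigma_n^2}}$. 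First I would make this dependence explicit by writing $\overline{P}_{s,M}\approx g(P/M)$, where $g$ absorbs the $h$-integration and carries no separate dependence on $P$ or $M$.

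With this established, increasing the spectral efficiency $\log_2 M$ by one bit/symbol amounts to replacing $M$ by $2M$. To keep the SER fixed I must keep $g(P/M)$ fixed, hence keep $P/M$ fixed, and doubling $M$ then forces $P$ to double. A direct substitution confirms the invariance: $\frac{\eta(2P)h}{(2M)\sqrt{2\sigma_n^2}}=\frac{\eta P h}{M\sqrt{2\sigma_n^2}}$, so every factor in \eqref{eq:SER_DensePAM} is unchanged and the approximated SER is exactly preserved when $(P,M)\mapsto(2P,2M)$.

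The only point requiring a short argument is the converse---why \emph{doubling}, and not some other scaling, is necessary. For this I would show $g$ is strictly monotonic: the argument $uh/M$ is strictly increasing in $P/M$, and the kernel $\frac{\exp[-(uh/M)^2]}{(uh/M)+\sqrt{(uh/M)^2+4/\pi}}$ is strictly decreasing in its argument, being the $z\geq 0$ branch of the \hbox{erfc} approximation \eqref{eq:Erfc_Aprox2} and \hbox{erfc} itself decreasing. Since both integrands in \eqref{eq:SER_DensePAM} are this kernel weighted by nonnegative factors independent of $P/M$, the whole expression is strictly decreasing in $P/M$. Equal SER therefore forces equal $P/M$, so the doubling of $M$ uniquely pins down the doubling of $P$. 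I expect this monotonicity check to be the only mild obstacle, and it reduces to the elementary decreasing behavior of the \hbox{erfc} kernel already used to justify \eqref{eq:Erfc_Aprox2}.
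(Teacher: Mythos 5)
Your proposal is correct and follows essentially the same route as the paper's proof: both observe that $P$ and $M$ enter \eqref{eq:SER_DensePAM} only through the ratio $uh/M$ with $u$ linear in $P$, so the substitution $(P,M)\mapsto(2P,2M)$ leaves every integrand unchanged. Your additional monotonicity argument showing that doubling is the \emph{unique} scaling is a valid strengthening, but it is not needed for the statement as phrased (which only claims sufficiency) and does not appear in the paper.
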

\begin{IEEEproof}
The SER expression in \eqref{eq:SER_DensePAM} depends on the modulation order $M$ only through the term $\left({uh}/{M}\right)^2$, which appears in both integrals in \eqref{eq:SER_DensePAM}, as part of the common term
\begin{equation}\label{proof.integrand}
\frac{\exp{\left[-\left(\frac{uh}{M}\right)^2\right]}}{\left(\frac{uh}{M}\right)+\sqrt{\left(\frac{uh}{M}\right)^2+\frac{4}{\pi}}}.
\end{equation}
The proof is completed by noting that $u$ in \eqref{eq:u} depends linearly on $P$. Therefore, doubling the power while also doubling $M$ (i.e., increasing spectral efficiency by 1~bit/symbol) will result in identical integrands in \eqref{eq:SER_DensePAM} (see \eqref{proof.integrand}), and thus identical SERs. 
\label{Proof1}
\end{IEEEproof}

The last result in this section is a further simplification of the SER for dense PAM constellations, which offers negligible loss in accuracy in the high transmit power regime. The following theorem presents this result.

\begin{theorem}[\textit{Approximated SER for Dense $M$-PAM and High Transmit Power}]
The average SER for dense $M$-PAM constellations  at high transmit powers can be approximated as
\begin{align}  
\nonumber
     &\overline{P}_{s,M}\approx\frac{M\gamma^2}{2u\sqrt{\pi}(h_g h_l \kappa)^{\gamma^2}}\exp{\left[ \gamma^2\sigma_R^2 \left(1+\gamma^2/2\right)\right]}\cdot\\   \nonumber &\Bigg[\int_{0}^{\hat{h}}
      h^{\gamma^2-2}\left[1+\dfrac{\exp{\left(\frac{-2\pi v }{\sqrt{6}}\right)}-1}{\exp{\left(\frac{-2\pi v }{\sqrt{6}}\right)}+1} \right] 
      \exp{\left[-\left(\frac{uh}{M}\right)^2\right]}\, dh \\ &+ \frac{2}{\sqrt{\pi}} \int_{\hat{h}}^{\infty}h^{\gamma^2-2} \frac{\exp{(- v^2)} }{ v+\sqrt{ v^2 + \frac{4}{\pi}}}\exp{\left[-\left(\frac{uh}{M}\right)^2\right]} \, dh\Biggr]. 
     \label{eq:SER_DenseMPAM_No4Pi}
 \end{align}
 \label{Theorem:SER_NoPi}
\end{theorem}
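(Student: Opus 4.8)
The plan is to derive \eqref{eq:SER_DenseMPAM_No4Pi} directly from the dense-PAM approximation \eqref{eq:SER_DensePAM} of Corollary~\ref{Theorem:SER_densePAM}, by trading the sharper erfc approximation \eqref{eq:Erfc_Aprox2} for the simpler one \eqref{eq:erfcAprox_ICTON} in the power-dependent factor. This mirrors exactly how the single-integral OOK expression \eqref{eq:BER_Aprox2_ICTON} was obtained from the two-integral expression \eqref{eq:BER_Approx2_UpperBound}. First I would note that the modulation order $M$ and the transmit power enter both integrands in \eqref{eq:SER_DensePAM} only through the common factor \eqref{proof.integrand}, which is precisely the $z\ge 0$ branch of \eqref{eq:Erfc_Aprox2} evaluated at $z=uh/M$.

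Next I would invoke the high-transmit-power hypothesis: since $u$ in \eqref{eq:u} is proportional to $P$, increasing $P$ pushes the argument $z=uh/M$ into the regime where \eqref{eq:erfcAprox_ICTON} is asymptotically tight. Making that replacement amounts to discarding the $4/\pi$ term under the square root, i.e., $(uh/M)+\sqrt{(uh/M)^2+4/\pi}\to 2(uh/M)$, so that the common factor \eqref{proof.integrand} collapses to $\frac{M}{2uh}\exp(-(uh/M)^2)$.

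The remaining steps are bookkeeping. I would pull the constant $\frac{M}{2u}$ out of both integrals, absorb the leftover $h^{-1}$ into the monomial weight so that $h^{\gamma^2-1}\mapsto h^{\gamma^2-2}$, and fold $\frac{M}{2u}$ into the leading constant $\gamma^2/[\sqrt{\pi}(h_gh_l\kappa)^{\gamma^2}]$ to form the prefactor $\frac{M\gamma^2}{2u\sqrt{\pi}(h_gh_l\kappa)^{\gamma^2}}$. The negative-argument bracket on $(0,\hat h)$ and the $\exp(-v^2)$-type factor on $(\hat h,\infty)$ are left untouched, so the two one-dimensional integrals that result are exactly those in \eqref{eq:SER_DenseMPAM_No4Pi}.

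The main obstacle is not this algebra but certifying that the swap \eqref{eq:Erfc_Aprox2}$\to$\eqref{eq:erfcAprox_ICTON} incurs only negligible loss. The step $(uh/M)+\sqrt{(uh/M)^2+4/\pi}\to 2(uh/M)$ is tight only for large $z=uh/M$; as $h\to 0$ the exact factor tends to the finite value $\sqrt{\pi}/2$ whereas its surrogate $\frac{M}{2uh}$ diverges, and indeed $h^{\gamma^2-2}$ fails to be integrable at the lower limit once $\gamma^2\le 1$. The careful part is therefore an error estimate for the lower integral: split it at some $h_0$ with $uh_0/M$ large, control the error on $[h_0,\hat h]$ using the $z\to\infty$ tightness of \eqref{eq:erfcAprox_ICTON}, and bound the $[0,h_0]$ remainder separately (restricting to $\gamma^2>1$ to preserve integrability). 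A subtle point to address here is that the bulk of the integrand sits near $z=uh/M=O(1)$ rather than at large $z$, so the accuracy is in truth governed by the effective support of the integrand---and hence by $\gamma^2$---as much as by $P$; making this quantitative is where the real work lies, and in the paper it is ultimately supported by the numerical agreement shown in Sec.~\ref{sec:results}.
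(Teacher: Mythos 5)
Your derivation matches the paper's own proof, which consists precisely of the substitution $\left(uh/M\right)^2+4/\pi\approx\left(uh/M\right)^2$ in \eqref{eq:SER_DensePAM} followed by the bookkeeping you describe; your reading of this as trading the erfc approximation \eqref{eq:Erfc_Aprox2} for \eqref{eq:erfcAprox_ICTON} is an equivalent framing of the same step. The error analysis you sketch (including the valid integrability caveat at $h\to 0$ when $\gamma^2\leq 1$) goes beyond what the paper provides, which justifies the approximation only by the numerical agreement reported in Sec.~\ref{sec:results}.
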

\begin{proof}
    By using $\left(uh/M\right)^2+ 4/\pi \approx \left(uh/M\right)^2$ in \eqref{eq:SER_DensePAM}.
\end{proof}

In Sec.~\ref{sec:results}, we discuss the accuracy of \eqref{eq:SER_Approx2_UpperBound} and the simple new approximation proposed in \eqref{eq:SER_DenseMPAM_No4Pi} for dense $M$-PAM constellations.

\vspace{-1ex}
\section{Numerical Results}
\label{sec:results}
In this section, we compare the accuracy of the approximate BER and SER expressions in \eqref{eq:BER_MPAM}, \eqref{eq:BER_Approx2_UpperBound}, \eqref{eq:BER_Aprox2_ICTON}, \eqref{eq:SER_Approx2_UpperBound}, and \eqref{eq:SER_DenseMPAM_No4Pi}, presented in Sec.~\ref{Sec:SER} for realistic link conditions. 
Additionally, we 
assess the impact of increasing the modulation order on the required transmitted power.

Table~\ref{tab:System_params} summarizes the parameters considered in our study.
These parameters are mostly based on~\cite{Farid2007,Korevaar03,Boluda2017}.
In particular, we consider a point-to-point $z=3$~km link where the transmitter sends a light beam with a divergence angle $\theta=1.32$ mrad operating at a wavelength of $\lambda=1550$~nm. The receiver is equipped with an aperture radius~$a=5$~cm with a responsivity $\beta=0.5$ A/W.

In terms of pointing errors and turbulence, the same three operating points considered in Fig.~\ref{FigBER_MPAM} have been considered.
The first ($\sigma_s=0.35$~m and $\sigma_R^2=0.1$), the second ($\sigma_s=0.25$~m and $\sigma_R^2=0.5$), and third ($\sigma_s=0.2$~m and $\sigma_R^2=0.9$) cases are shown in the figures with pink, orange, and green, resp.
Finally, we consider the attenuation coefficient \mbox{$\sigma(\lambda)=0.2208$~dB/km} corresponding to clear air weather condition, which results in $h_l=0.516$ and the attenuation due to the geometric spread $h_g=1.3\cdot10^{-3}$.

\subsection{Bit Error Rate}

In Fig.~\ref{FigBER_OOK}, the average BER for OOK signaling is plotted as a function of the average transmit power $P$.
Solid, dotted, and dashed curves correspond to the exact average BER~\eqref{eq:ExactOOKBER}, the approximate average BER in~\eqref{eq:BER_Approx2_UpperBound} (from \cite{OECC}), and the approximate average BER in~\eqref{eq:BER_Aprox2_ICTON} (from \cite{ICTON}), resp.
BER estimates obtained via Monte Carlo simulations are also shown with circles. 
Moreover, a HD FEC threshold of $3.84\times10^{-3}$ for a coding rate 0.937~\cite[Table I]{FEC} is also shown. The pre-FEC BER must be lower than this HD FEC threshold in order to achieve a post-FEC BER lower than $10^{-15}$. 
We define an accuracy parameter $\Delta$ as the difference in transmit power $P$ at which the exact BER~\eqref{eq:ExactOOKBER} and our proposed approximation~\eqref{eq:BER_Approx2_UpperBound} reach the HD FEC BER limit.

\begin{table}[t]
\begin{center}
\caption{FSO System Configuration} \label{tab:System_params}  
\small
\renewcommand{\arraystretch}{1.2}
\begin{tabular}{cc}
\hline
 \multicolumn{2}{c}{{\textbf{FSO Parameters}}} \\
            \hline
\textbf{Parameter} & \textbf{Value}  \\
\hline
Wavelength ($\lambda$)  & 1550 nm  \\
Link distance ($z$) & 3 km \\
Noise  standard deviation ($\sigma_n$)& $10^{-7}$~A  \\
Electro-optical conversion factor ($\alpha$) & 1~W/A\\
Responsivity ($\beta$) & 0.5~A/W  \\
 Tx divergence angle ($\theta$) & $1.32$ mrad  \\
  Beam Waist ($\omega_z$) &   $1.98$ m\\
Receiver radius ($a$) & $5$ cm  \\
Geometric spread attenuation ($h_g$)& $1.3\cdot10^{-3}$\\
\hline
  \multicolumn{2}{c}{{\textbf{Turbulence, Pointing Error and Weather Parameters}}} \\
             \hline
         \textbf{Parameter} & \textbf{Value}  \\
            \hline
             Attenuation coefficient ($\sigma(\lambda)$)   & $0.2208$~dB/km\\ 
             Atmospheric Attenuation ($h_l$)   & $0.516$\\
              Rytov variance ($\sigma_R^2$) &   $\leq 1$\\ 
            Jitter angle ($\theta_{s}$) & $0.067-0.116$ mrad \\
            \hline
\end{tabular}
\end{center}
\end{table}

Fig.~\ref{FigBER_OOK} shows that the best BER performance is achieved in the first case (pink curves). 
Although this case corresponds to the most severe pointing errors, the dominant effect for the operating points under consideration is the atmospheric turbulence which is the weakest for this case.
For the first case (high pointing errors and very weak turbulence, pink curves), the BER approximation~$\eqref{eq:BER_Aprox2_ICTON}$ proposed in~\cite{ICTON} is $1.13$~dB away from the exact BER~\eqref{eq:ExactOOKBER} at the HD FEC limit.
For the second (medium pointing errors and weak turbulence, orange curves) and third cases (the lowest pointing errors and the most severe weak turbulence, green curves), the approximation~\eqref{eq:BER_Aprox2_ICTON} still results in imprecise estimates, off by $0.44$~dB and $0.69$~dB at the HD-FEC threshold, resp.
The new approximation~\eqref{eq:BER_Approx2_UpperBound} (initially  proposed in~\cite{OECC}) however, has a gap of only \hbox{$\Delta=0.20$~dB} for the first case, while for the second and third cases, the gap $\Delta$ is below $0.09$~dB. 
Additionally, we see in Fig.~\ref{FigBER_OOK} that for the BER values of interest (i.e., BERs between $10^{-2}-10^{-5}$), operating in the high pointing error regime, the slope of the BER approximation in~\eqref{eq:BER_Aprox2_ICTON} (pink dashed curve) differs from the slope of the Monte Carlo results. 
Our last proposed approximation~\eqref{eq:BER_Approx2_UpperBound}, however, seems to provide BER curves with correct slopes.

\begin{figure}[t]
        \centering
            \begin{tikzpicture}[scale=1]    

\definecolor{my_pink}{rgb}{1,0.07,0.65}
\definecolor{my_g}{rgb}{0.09,0.71,0.14}
\definecolor{FEC}{rgb}{0.5, 0.0, 0.13}

\definecolor{fuchsia}{rgb}{0.6,0.4,0.8}

\begin{semilogyaxis}[ 
    width=0.49\textwidth,
    height=3.5in,   
    xmin=-4, xmax=20,
    ymin=1e-6, ymax=3e-1,  
    xlabel={Transmitted Power, $P$~[dBm]}, 
    ylabel shift=-4ex,
    xlabel shift=-4ex,
    font=\small,
    ylabel={Average BER, $\overline{P}_{b,2}$}, 
    xtick={-4,0,...,20},    
    grid=major,
    grid style={dashed,lightgray!75},
]

\addplot [color=black,solid,line width=1pt,mark options={solid}]coordinates {
         (10,1) (10,2)};\label{Orig_var1_jit12_black}
\addplot [color=black,dashed,line width=1pt,mark options={solid}]coordinates {
         (10,1) (10,2)};\label{Dashed_black}
\addplot [color=black,dotted,line width=1pt,mark options={solid}]coordinates {(10,1) (10,2)};\label{dotted_black}
 \addplot [only marks, color=black, mark=*, mark options={solid,scale=0.8, fill=white}]coordinates {(0.35,1)};\label{MC}

  \addplot [name path=lim, color=blue, line width=1pt]
   coordinates {(-10,3.84e-3) (21,3.84e-3)};\label{BERlimit}
  

\addplot [color=my_g,solid,line width=1pt,mark options={solid}]file{./txtData/SER_PAM/varR09_jit2/OOK_SER_orig.txt};

\addplot [ color=my_g,dashed,line width=1pt,mark options={solid}] file {./txtData/BER_OOK_ICTON/Aprox2_varR09_jit2.txt};

\addplot [color=my_g,dotted,line width=1pt,mark options={solid}] file{./txtData/SER_PAM/UpperBound/varR09_jit2/OOK_aprox2.txt};

\addplot [only marks, color=my_g, mark=*, mark options={solid,scale=0.8, fill=white}]file{./txtData/SER_PAM/varR09_jit2/OOK_SER_MC.txt};\


\addplot [color=my_pink,solid,line width=1pt,mark options={solid}]file{./txtData/SER_PAM/varR01_jit35/OOK_SER_orig.txt};\label{Orig_var01_jit35}

\addplot [ color=my_pink,dotted,line width=1pt,mark options={solid}] file{./txtData/SER_PAM/UpperBound/varR01_jit35/OOK_aprox2.txt};
    
\addplot [color=my_pink,dashed,line width=1pt,mark options={solid}] file{./txtData/BER_OOK_ICTON/Aprox2_varR01_jit35.txt};


\addplot [only marks, color=my_pink, mark=*, mark options={solid,fill=white,scale=0.8}]file{./txtData/SER_PAM/varR01_jit35/OOK_MC.txt};\label{MC_orig_var01_jit35}


\addplot [color=orange,solid,line width=1pt,mark options={solid}] file {./txtData/SER_PAM/varR05_jit25/OOK_SER_Orig.txt};\label{Orig_var05_jit25}

\addplot [color=orange,dotted,line width=1pt,mark options={solid}] file{./txtData/SER_PAM/UpperBound/varR05_jit25/OOK_aprox2.txt};
    
\addplot [color=orange,dashed,line width=1pt,mark options={solid}] file{./txtData/BER_OOK_ICTON/Aprox2_varR05_jit25.txt};

\addplot [only marks, color=orange, mark=*, mark options={solid,scale=0.8,fill=white}]file{./txtData/SER_PAM/varR05_jit25/OOK_MC.txt};\label{MC_orig_var05_jit25}

\end{semilogyaxis} 

\node [draw,fill=white,anchor= south west,font=\scriptsize] at (4.8,5.6) {\shortstack[l]{ 
\ref{Orig_var1_jit12_black} \eqref{eq:ExactOOKBER}: Exact \\
 \ref{dotted_black} \eqref{eq:BER_Approx2_UpperBound}: \cite{OECC}\\
\ref{Dashed_black} \eqref{eq:BER_Aprox2_ICTON}: \cite{ICTON}\\
\hspace{0.17cm} \ref{MC} \hspace{0.2cm}  Monte Carlo\\
  \ref{BERlimit} HD-FEC BER}};

\node [coordinate](input) {};

\draw[-,dashed, black,line width=0.5pt] ($(input.north)+(1.5,13.7em)$) -- ($(input.north)+(1.5,14.7em)$);
\draw[-,dashed, black,line width=0.5pt] ($(input.north)+(1.82,13.7em)$) -- ($(input.north)+(1.82,14.7em)$);
\node at ($(input.north) + (1.9, 15 em)$) [ scale=1, font=\scriptsize] {$1.13$~dB};

\draw[-,dashed, black,line width=0.5pt] ($(input.north)+(4,13.6em)$) -- ($(input.north)+(4,14.8em)$);
\draw[-,dashed, black,line width=0.5pt] ($(input.north)+(4.2,13.6em)$) -- ($(input.north)+(4.2,14.8em)$);
 \node at ($(input.north) + (4.2, 15em)$) [ scale=1, font=\scriptsize] {$0.69$~dB};

\draw[-,dashed, black,line width=0.5pt] ($(input.north)+(2.83,13.7em)$) -- ($(input.north)+(2.83,14.7em)$);
\draw[-,dashed, black,line width=0.5pt] ($(input.north)+(2.95,13.7em)$) -- ($(input.north)+(2.95,14.7em)$);
 \node at ($(input.north) + (3.05, 15 em)$) [scale=1, font=\scriptsize] {$0.44$ dB};

\draw[-,dashed, black,line width=0.5pt] ($(input.north)+(1.5,13.7em)$) -- ($(input.north)+(1.5,12em)$);
\draw[-,dashed, black,line width=0.5pt] ($(input.north)+(1.55,13.7em)$) -- ($(input.north)+(1.55,12em)$);
\node at ($(input.north) + (1.3, 11.6 em)$) [ scale=1, font=\scriptsize] {$\Delta\approx0.20$~dB};

\draw[-,dashed, black,line width=0.5pt] ($(input.north)+(4,13.7em)$) -- ($(input.north)+(4,12em)$);
\node at ($(input.north) + (4.2, 11.6em)$) [scale=1, font=\scriptsize] {$\Delta\approx0.09$~dB};

\draw[-,dashed, black,line width=0.5pt] ($(input.north)+(2.83,13.7em)$) -- ($(input.north)+(2.83,11.7em)$);
\node at ($(input.north) + (2.75, 11em)$) [scale=1, font=\scriptsize] {$\Delta\approx0.07$ dB};

\draw[stealth-,solid, black,line width=0.7pt,rounded corners]($(input.north)+(2.9,18em)$) -- ($(input.north)+(0.8,16.5em)$);
\node at ($(input.north) + (3.6, 18.3 em)$) [scale=1.2, font=\scriptsize] {Increasing};
\node at ($(input.north) + (3.6, 17.7 em)$) [scale=1.2, font=\scriptsize] {Turbulence};


\node[ellipse,line width=0.7pt, draw,minimum width = 0.6cm, 
	minimum height = 0.05cm, rotate=45] (e) at (2.7,1.1) {};
\node [draw,color=my_pink, fill=white,anchor= south west,font=\scriptsize,text=black] at (0.7,0.2 em) {\shortstack[l]{ 
$\sigma_s=0.35$ m \\
$\sigma_R^2=0.1$}};
 
\node[ellipse,line width=0.7pt, draw,minimum width = 0.6cm, 
	minimum height = 0.05cm,rotate=45] (e) at (6.4,1.3) {};
\node [draw,color=my_g, fill=white,anchor= south west,font=\scriptsize,text=black] at (5.25,0.2 em) {\shortstack[l]{ 
$\sigma_s=0.2$ m \\
$\sigma_R^2=0.9$}};

\node[ellipse,line width=0.7pt, draw,minimum width = 0.6cm, 
	minimum height = 0.05cm,rotate=45] (e) at (4.8,1.3) {};
\node [draw,color=orange, fill=white,anchor= south west,font=\scriptsize,text=black] at (3.15,0.2 em) {\shortstack[l]{ 
$\sigma_s=0.25$~m \\
$\sigma_R^2=0.5$}};
 
\end{tikzpicture}
        \vspace{-2ex}
    \caption{Average BER versus transmit power considering OOK, clear air, $z=3$~km, and three different operating points for ($\sigma_s$, $\sigma_R^2$): ($0.2,0.9$), ($0.25,0.5$), and ($0.35,0.1$).}
    \label{FigBER_OOK}
\end{figure}
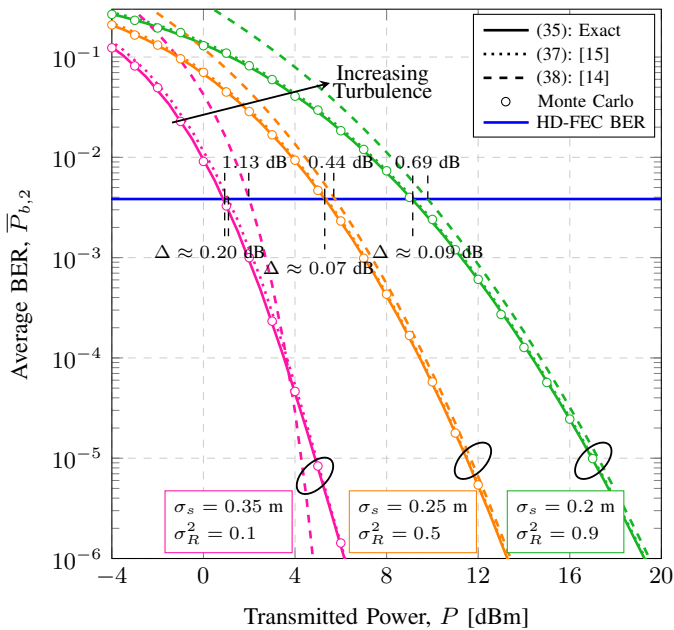


\begin{figure}[t]
    \centering    
    \begin{minipage}{0.02\columnwidth}
        \centering
        \rotatebox{90}{BER Ratio}
    \end{minipage}%
    \begin{minipage}{1.04\columnwidth}
        \centering
        \begin{subfigure}[b]{\linewidth}
            \centering
            \pgfplotsset{compat=1.17}
\usetikzlibrary{arrows,decorations.markings}
\usetikzlibrary{calc,arrows}

\begin{tikzpicture}[scale=1]    

\definecolor{my_pink}{rgb}{1,0.07,0.65}
\definecolor{my_g}{rgb}{0.09,0.71,0.14}
\definecolor{FEC}{rgb}{0.5, 0.0, 0.13}

\definecolor{fuchsia}{rgb}{0.6,0.4,0.8}

\definecolor{bronze}{rgb}{0.8, 0.5, 0.2}
 \definecolor{green2}{rgb}{0.0, 0.8, 0.6}
 \definecolor{my_b2}{rgb}{0.53,0.64,0.93}

\begin{axis}[ 
    width=1\textwidth,
    height=1.6in,   
    xmin=-4, xmax=21,
    ymin=-0.1, ymax=3,  
    xlabel={}, 
    ylabel shift=1ex,
    ylabel={}, 
    xtick={-4,0,...,20},
    xticklabels={}, 
     ytick={0,1,...,3},
    grid=major,
    grid style={dashed,lightgray!75},
      name=bottomaxis
]

\addplot [color=black,solid,line width=1pt,mark options={solid}]coordinates {
         (30,1) (30,2)};\label{Orig_var1_jit12_black}
\addplot [color=black,dotted,line width=1pt,mark options={solid}]coordinates {
         (30,1) (30,2)};\label{dottedLine}
  

\addplot [color=my_pink,dashed,line width=1.2pt,mark options={solid}] file {./txtData/2PAM_Ratio/AproxOrig_varR01_jit35.txt};

\addplot [color=my_pink,dotted,line width=1.2pt, mark options={solid}] file{./txtData/2PAM_Ratio/UpperOrig_varR01_jit35.txt};


\node [draw,fill=white,anchor= north east,font=\scriptsize] at (rel axis cs:0.99,0.98) {\shortstack[l]{  
\ref{dottedLine} \eqref{eq:BER_Approx2_UpperBound}\\
\ref{Dashed_black} \eqref{eq:BER_Aprox2_ICTON}}};

\node [draw,fill=white,anchor= north east,font=\scriptsize] at (rel axis cs:0.8,0.98) {\shortstack[l]{ 
$\sigma_s=0.35$~m, $\sigma_R^2=0.1$}};


\end{axis} 

\end{tikzpicture}
        \end{subfigure}
        
        \vspace{-0.4em} 
        
        \begin{subfigure}[b]{\linewidth}
            \centering
            \pgfplotsset{compat=1.17}
\usetikzlibrary{arrows,decorations.markings}
\usetikzlibrary{calc,arrows}

\begin{tikzpicture}[scale=1]    

\definecolor{my_pink}{rgb}{1,0.07,0.65}
\definecolor{my_g}{rgb}{0.09,0.71,0.14}
\definecolor{FEC}{rgb}{0.5, 0.0, 0.13}

\definecolor{fuchsia}{rgb}{0.6,0.4,0.8}

\definecolor{bronze}{rgb}{0.8, 0.5, 0.2}
 \definecolor{green2}{rgb}{0.0, 0.8, 0.6}
 \definecolor{my_b2}{rgb}{0.53,0.64,0.93}

\begin{axis}[ 
    width=1\textwidth,
    height=1.6in,   
    xmin=-4, xmax=21,
    ymin=0, ymax=3,  
    xlabel={}, 
    xtick={-4,0,...,20},
    xticklabels={},
    ytick={0,1,2,...,3},
    ylabel shift=1ex,
    ylabel={}, 
    grid=major,
    grid style={dashed,lightgray!75},
      name=bottomaxis
]

\addplot [color=black,dotted,line width=1pt,mark options={solid}]coordinates {
         (30,1) (30,2)};\label{dottedLine}


\addplot [color=orange,dashed,line width=1.2pt,mark options={solid}] file {./txtData/2PAM_Ratio/AproxOrig_varR05_jit25.txt};

\addplot [color=orange,dotted,line width=1.2pt,mark options={solid}] file {./txtData/2PAM_Ratio/UpperOrig_varR05_jit25.txt};


\node [draw,fill=white,anchor= north east,font=\scriptsize] at (rel axis cs:0.99,0.98) {\shortstack[l]{  
\ref{dottedLine} \eqref{eq:BER_Approx2_UpperBound}\\
\ref{Dashed_black} \eqref{eq:BER_Aprox2_ICTON}}};

\node [draw,fill=white,anchor= north east,font=\scriptsize] at (rel axis cs:0.8,0.98) {\shortstack[l]{ 
 $\sigma_s=0.25$~m, $\sigma_R^2=0.5$}};


\end{axis} 

\end{tikzpicture}
        \end{subfigure}
        
        \vspace{-0.4em} 
        
        \begin{subfigure}[b]{\linewidth}
            \centering
            \pgfplotsset{compat=1.17}
\usetikzlibrary{arrows,decorations.markings}
\usetikzlibrary{calc,arrows}

\begin{tikzpicture}[scale=1]    

\definecolor{my_pink}{rgb}{1,0.07,0.65}
\definecolor{my_g}{rgb}{0.09,0.71,0.14}
\definecolor{FEC}{rgb}{0.5, 0.0, 0.13}

\definecolor{fuchsia}{rgb}{0.6,0.4,0.8}

\definecolor{bronze}{rgb}{0.8, 0.5, 0.2}
 \definecolor{green2}{rgb}{0.0, 0.8, 0.6}
 \definecolor{my_b2}{rgb}{0.53,0.64,0.93}

\begin{axis}[ 
    width=1\textwidth,
    height=1.6in,   
    xmin=-4, xmax=21,
    ymin=0, ymax=3,  
    xlabel={Transmit Power, $P$~[dBm]}, 
    ylabel shift=1ex,
    ylabel={ }, 
    ytick={0,1,2,...,3},
    yshift=3ex,
   xtick={-4,0,...,20},
    grid=major,
    grid style={dashed,lightgray!75},
      name=bottomaxis
]


\addplot [color=black,dotted,line width=1pt,mark options={solid}]coordinates {
         (30,1) (30,2)};\label{dottedLine}


\addplot [color=my_g,dashed,line width=1.2pt,mark options={solid}] file {./txtData/2PAM_Ratio/AproxOrig_varR09_jit2.txt};

\addplot [color=my_g,dotted,line width=1.2pt,mark options={solid}] file {./txtData/2PAM_Ratio/UpperOrig_varR09_jit2.txt};


\node [draw,fill=white,anchor= north east,font=\scriptsize] at (rel axis cs:0.99,0.98) {\shortstack[l]{  
\ref{dottedLine} \eqref{eq:BER_Approx2_UpperBound}\\
\ref{Dashed_black} \eqref{eq:BER_Aprox2_ICTON}}};

\node [draw,fill=white,anchor= north east,font=\scriptsize] at (rel axis cs:0.8,0.98) {\shortstack[l]{ 
$\sigma_s=0.2$~m, $\sigma_R^2=0.9$}};


\end{axis} 

\end{tikzpicture}
        \end{subfigure}
    \end{minipage}
    
    \caption{Ratio between BER approximations in \eqref{eq:BER_Approx2_UpperBound} and \eqref{eq:BER_Aprox2_ICTON} and the exact BER in \eqref{eq:ExactOOKBER} versus transmitted power.}
   \label{FigRatio_2PAM}
\end{figure}
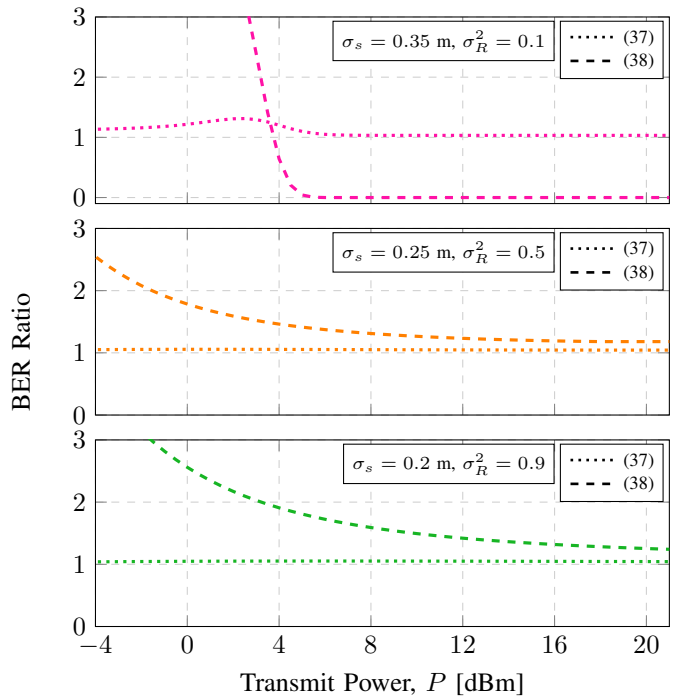

Fig.~\ref{FigRatio_2PAM} provides additional insights into the accuracy of the proposed approximations \eqref{eq:BER_Approx2_UpperBound} and \eqref{eq:BER_Aprox2_ICTON}. This figure depicts the ratio between these BER approximations and the exact BER in \eqref{eq:ExactOOKBER} as a function of the transmitted power. Dotted curves represent the ratio between \eqref{eq:ExactOOKBER} and \eqref{eq:BER_Approx2_UpperBound}, while dashed curves correspond to the ratio between \eqref{eq:ExactOOKBER} and \eqref{eq:BER_Aprox2_ICTON}.
Fig.~\ref{FigRatio_2PAM} shows that operating in the high pointing error regime (pink curves), the ratio between \eqref{eq:ExactOOKBER} and \eqref{eq:BER_Aprox2_ICTON} (dashed curve) converges to $0$. This is explained by the slope of the BER approximation ~\eqref{eq:BER_Aprox2_ICTON}, which differs from the correct slope, as shown in Fig.~\ref{FigBER_OOK}.
In contrast, the ratio between \eqref{eq:ExactOOKBER} and \eqref{eq:BER_Approx2_UpperBound} stabilizes at approximately $1.04$, for all three cases, as the transmitted power increases.\footnote{To obtain the dotted pink curve, simulations were performed up to $10$~dBm and BER$=10^{-9}$. Reaching $20$~dBm would require simulating up to BER$\approx10^{-15}$, so the results have been extrapolated.} 
This indicates that while the approximation \eqref{eq:BER_Approx2_UpperBound} does not exactly match the exact BER, it maintains the same asymptotic decay rate. In other words, the approximation slightly overestimates the BER by a constant factor but captures the correct slope.
The stabilization of the ratio implies that the slopes of the BER and the approximation remain nearly identical in the high-power regime. This suggests that \eqref{eq:BER_Approx2_UpperBound} is asymptotically accurate for predicting the BER decay rate and, in general, provides a more reliable BER estimate.
Table \ref{tab:BER_Comparation} shows a comparison between the exact average BER~\eqref{eq:ExactOOKBER}, and the approximations~\eqref{eq:BER_Approx2_UpperBound} and~\eqref{eq:BER_Aprox2_ICTON} in terms of accuracy, simplicity and slope. 

\begin{table}[t]
\renewcommand{\arraystretch}{1.3}
\centering
\caption{Comparison of average BER Expressions}
\begin{tabular}{ccccc}
\hline
{\textbf{Expression}} & {\textbf{Ref.}} & \textbf{{Accuracy}} & {\textbf{Simplicity}}& {\textbf{Slope}}\\
\hline
\eqref{eq:ExactOOKBER}    & \cite{ICTON} & Exact & Complex & -\\
\eqref{eq:BER_Approx2_UpperBound}& \cite{OECC} & More accurate & Moderately simple & Correct\\
\eqref{eq:BER_Aprox2_ICTON}  & \cite{ICTON}  &  Less accurate & Very simple & Incorrect\\
\hline
\end{tabular}
\label{tab:BER_Comparation}
\end{table}

\begin{figure}[t]
        \centering
        \pgfplotsset{compat=1.17}
\usetikzlibrary{arrows,decorations.markings}
\usetikzlibrary{calc,arrows}

\begin{tikzpicture}[scale=1]    

\definecolor{my_pink}{rgb}{1,0.07,0.65}
\definecolor{my_g}{rgb}{0.09,0.71,0.14}
\definecolor{FEC}{rgb}{0.5, 0.0, 0.13}

\definecolor{fuchsia}{rgb}{0.6,0.4,0.8}

\definecolor{bronze}{rgb}{0.8, 0.5, 0.2}
 \definecolor{green2}{rgb}{0.0, 0.8, 0.6}
 \definecolor{my_b2}{rgb}{0.53,0.64,0.93}

\begin{semilogyaxis}[ 
    width=0.49\textwidth,
    height=3.5in,   
    xmin=0, xmax=25,
    ymin=1e-6, ymax=7e-1,  
    xlabel={Transmit Power, $P$~[dBm]}, 
    ylabel shift=-1ex,
    ylabel={Average SER,~$\overline{P}_{s,2}$}, 
    xtick={0,5,...,25},
    grid=major,
    grid style={dashed,lightgray!75},
      name=bottomaxis
]

\addplot [color=black,solid,line width=1pt,mark options={solid}]coordinates {
         (10,1) (10,2)};\label{Orig_var1_jit12_black}
\addplot [color=black,dashed,line width=1pt,mark=triangle*,mark options={solid,fill=black, scale=0.6}]coordinates {
         (10,3) (10,4)};\label{Dashed_b_triangle}         
\addplot [color=black,dotted,line width=1.2pt,mark options={solid}]coordinates {
         (10,1) (10,2)};\label{dotted_black}
 \addplot [only marks, color=black, mark=*, mark options={solid,scale=0.8, fill=white}]coordinates {(0.35,25)};\label{MC}

\addplot [only marks, color=black, mark=square*, mark options={solid,scale=0.6}]coordinates {(0.35,25)};\label{Square_apr}


\addplot [color=my_pink,solid,line width=1pt,mark options={solid}] file {./txtData/SER_PAM/varR01_jit35/4PAM_SER_orig.txt};

\addplot [color=my_pink,dotted,line width=1.2pt, mark options={solid}] file{./txtData/SER_PAM/UpperBound/varR01_jit35/4PAM_aprox2.txt}; 

\addplot [only marks, color=my_pink, mark=*, mark options={solid,fill=white,scale=0.8}]file{./txtData/SER_PAM/varR01_jit35/4PAM_SER_MC.txt};


 \addplot [color=orange,solid,line width=1pt,mark options={solid}] file {./txtData/SER_PAM/varR05_jit25/4PAM_SER_orig.txt};

\addplot [color=orange,dotted,line width=1.2pt, mark options={solid}]file{./txtData/SER_PAM/UpperBound/varR05_jit25/4PAM_aprox2.txt};

\addplot [only marks, color=orange, mark=*, mark options={solid,fill=white,scale=0.8}]file{./txtData/SER_PAM/varR05_jit25/4PAM_SER_MC.txt};


 \addplot [color=my_g,solid,line width=1pt,mark options={solid}] file {./txtData/SER_PAM/varR09_jit2/4PAM_SER_orig.txt};


\addplot [color=my_g,dotted,line width=1.2pt, mark options={solid}]file{./txtData/SER_PAM/UpperBound/varR09_jit2/4PAM_aprox2.txt};

\addplot [only marks, color=my_g, mark=*, mark options={solid,fill=white,scale=0.8}]file{./txtData/SER_PAM/varR09_jit2/4PAM_SER_MC.txt};


  \addplot [ color=blue, line width=1pt]
   coordinates {(-15,1e-3) (30,1e-3)};
   \label{FEClimitBlue}

\end{semilogyaxis} 

\draw[-,dashed, black,line width=0.5pt] ($(input.north)+(3.57,10.7em)$) -- ($(input.north)+(3.57,9em)$);
\node at ($(input.north) + (3.15, 3.05 cm)$) [scale=1, font=\scriptsize] {$\Delta\approx0.06$ dB};

\draw[-,dashed, black,line width=0.5pt] ($(input.north)+(4.85,10.7em)$) -- ($(input.north)+(4.85,9em)$);
\node at ($(input.north) + (4.9, 3.05cm)$) [scale=1, font=\scriptsize] {$\Delta\approx0.07$ dB};

\draw[-,dashed, black,line width=0.5pt] ($(input.north)+(2.08,10.7em)$) -- ($(input.north)+(2.08,9em)$);
\node at ($(input.north) + (1.5, 3.05cm )$) [scale=1, font=\scriptsize] {$\Delta\approx0.19$ dB};

\node[ellipse, draw,minimum width = 0.6cm, 
	minimum height = 0.05cm, rotate=45] (e) at (2.9,1.3) {};

\node [draw,color=my_pink, fill=white,anchor= south west,font=\scriptsize,text=black] at (1.2,0.3 em) {\shortstack[l]{ 
$\sigma_s=0.35$ m \\
$\sigma_R^2=0.1$}};

\node [draw,color=my_g, fill=white,anchor= south west,font=\scriptsize,text=black] at (5.35,0.3 em) {\shortstack[l]{ 
$\sigma_s=0.2$ m \\
$\sigma_R^2=0.9$}};
\node[ellipse, draw,minimum width = 0.6cm, 
	minimum height = 0.05cm,rotate=45] (e) at (6.4,1.3)  {};

\node[ellipse, draw,minimum width = 0.6cm, 
	minimum height = 0.05cm,rotate=45] (e) at (4.8,1.3) {};
\node [draw,color=orange, fill=white,anchor= south west,font=\scriptsize,text=black] at (3.25,0.3em) {\shortstack[l]{ 
$\sigma_s=0.25$ m \\
$\sigma_R^2=0.5$}};

\node [draw,fill=white,anchor= south west,font=\scriptsize] at (4.77,5.97) {\shortstack[l]{ 
\ref{Orig_var1_jit12_black}  Theorem~\ref{th:SER_exact} \\
\ref{dotted_black} Theorem~\ref{Theorem:SER_Aprox} \\
 \hspace{0.12cm}   \ref{MC} \hspace{0.24cm}  Monte Carlo \\
  \ref{FEClimitBlue} SER threshold}};

\draw[stealth-,solid, black,line width=0.7pt,rounded corners]($(input.north)+(3,17em)$) -- ($(input.north)+(1.45,15.5em)$);
\node at ($(input.north) + (3.65, 17.3 em)$) [scale=1.2, font=\scriptsize] {Increasing};
\node at ($(input.north) + (3.65, 16.6 em)$) [scale=1.2, font=\scriptsize] {Turbulence};


\node [coordinate](input) {};


\end{tikzpicture}
        \vspace{-2ex}
    \caption{Average SER versus transmit power considering 4-PAM, and three different operating points for ($\sigma_s$, $\sigma_R^2$): ($0.2,0.9$), ($0.25,0.5$), and ($0.35,0.1$).}
    \label{FigSER_4PAM}
\end{figure}
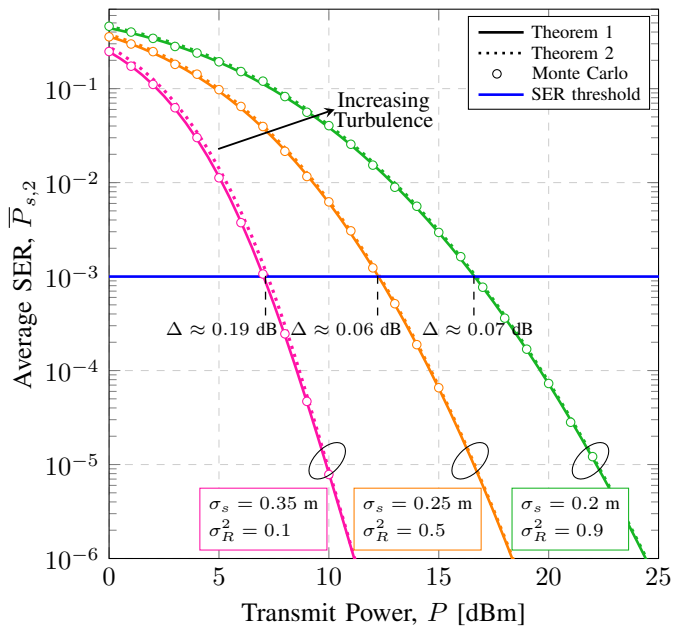


\subsection{Symbol Error Rate}

In Fig.~\ref{FigSER_4PAM}, SERs based on \eqref{eq:SER_exact_PAM} (Theorem~\ref{th:SER_exact}) and \eqref{eq:SER_Approx2_UpperBound} (Theorem~\ref{Theorem:SER_Aprox}) are plotted as a function of the average transmit power $P$ for 4-PAM. 
Results are validated using Monte Carlo simulation (shown with circles), whereas theoretical results are plotted with solid or dotted lines, corresponding to the exact SER in \eqref{eq:SER_exact_PAM} and the approximate SER in~\eqref{eq:SER_Approx2_UpperBound}, resp.
Moreover, an SER threshold of $1\times10^{-3}$ is shown.  
Similarly to Fig.~\ref{FigBER_OOK}, here we use $\Delta$, as the difference in transmit power $P$ at which the exact SER~\eqref{eq:SER_exact_PAM} and our proposed SER approximation \eqref{eq:SER_Approx2_UpperBound} reach the SER threshold.

Similar to Fig.~\ref{FigBER_OOK}, we first observe from Fig.~\ref{FigSER_4PAM} that for all operating points under consideration, a perfect match is shown between the simulated results and those obtained from the derived exact expression.
We then observe that for the first operating point (high pointing
errors and very weak turbulence, pink curves), the SER approximation \eqref{eq:SER_Approx2_UpperBound} is $0.19$~dB away from the exact SER \eqref{eq:SER_exact_PAM} at the SER threshold. 
For the second (medium pointing errors and weak turbulence, orange curves) and third operating points (the lowest pointing errors and most severe weak turbulence, green curves), the gap reduces to
$\Delta\approx 0.06$~dB and $\Delta\approx0.07$~dB, resp. In both cases, the dotted line corresponding to~\eqref{eq:SER_Approx2_UpperBound} is so close to the exact SER that it is barely visible.
From Fig.~\ref{FigSER_4PAM}, we conclude that for weak turbulence and 4-PAM, the proposed average SER approximation \eqref{eq:SER_Approx2_UpperBound} is extremely accurate for the second and third cases, with causing a small ($\leq 0.19$ dB)
inaccuracy for the first case.
The first two rows of Table \ref{tab:SER_Comparation} show a comparison between the exact average SER~\eqref{eq:SER_exact_PAM} and the approximation~\eqref{eq:SER_Approx2_UpperBound} in terms of accuracy and simplicity. 

\begin{table}[t]
\renewcommand{\arraystretch}{1.3}
\centering
\caption{Comparison of average SER Expressions}
    \begin{tabular}{cccc}
        \hline
        {\textbf{Expression}} & {\textbf{Ref.}} & \textbf{{Accuracy}} & {\textbf{Simplicity}}\\
        \hline
        \eqref{eq:SER_exact_PAM}    & This work  & Exact & Complex\\
        \eqref{eq:SER_Approx2_UpperBound}& This work  & Very accurate & Moderately simple \\
        \eqref{eq:SER_DenseMPAM_No4Pi}& This work & Less accurate & Moderately simple \\
        \hline
    \end{tabular}
    \label{tab:SER_Comparation}
\end{table}

In Fig.~\ref{FigSER_VarR01_jit35}, the average SERs in~\eqref{eq:SER_exact_PAM} and \eqref{eq:SER_Approx2_UpperBound} are plotted for $M$-PAM ($M=2,4,8,16,32$) as a function of the transmitted power $P$ in the high pointing errors and very weak turbulence regime ($\sigma_s=0.35$ m and $\sigma_R^2=0.1$). As expected, we observe a degradation in SER as the modulation order $M$ increases. For $P=6$~dBm, increasing the modulation order from $M=2$ (OOK) to $M=4$ increases the SER from $1.4\times10^{-6}$ to $4.9\times10^{-3}$. To guarantee that the SER threshold is achieved for all modulation formats, the transmitted power must be increased as $M$ increases.
However, FSO systems must be designed to be eye-safe, thus they are power-limited.
When the allowable transmit power is $10$~dBm, for a channel affected by pointing errors and very weak turbulence, the system can operate with up to $4$-PAM while still ensuring an SER below the threshold. 
From Fig.~\ref{FigSER_VarR01_jit35}, we can conclude that, as long as eye-safety is ensured, increasing the transmitted power can be used to switch to a higher modulation order for $M$-PAM. 

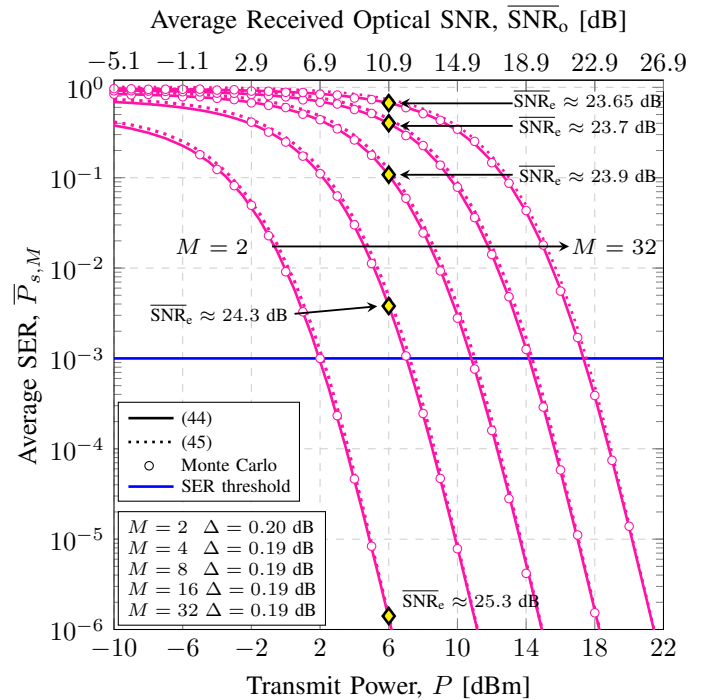
\begin{figure}[t]
        \centering
        \pgfplotsset{compat=1.17}
\usetikzlibrary{arrows,decorations.markings}
\usetikzlibrary{calc,arrows}

\begin{tikzpicture}[scale=1]    

\definecolor{my_pink}{rgb}{1,0.07,0.65}
\definecolor{g2}{rgb}{0,0.5,0.5}
\definecolor{FEC}{rgb}{0.5, 0.0, 0.13}

\definecolor{fuchsia}{rgb}{0.6,0.4,0.8}

\definecolor{bronze}{rgb}{0.8, 0.5, 0.2}
 \definecolor{green2}{rgb}{0.0, 0.8, 0.6}
 \definecolor{my_b2}{rgb}{0.53,0.64,0.93}

\begin{semilogyaxis}[ 
    width=0.49\textwidth,
    height=3.5in,   
    xmin=-10, xmax=22,
    ymin=1e-6, ymax=1.2,  
    xlabel={Transmit Power, $P$~[dBm]}, 
    ylabel shift=-1ex,
    ylabel={Average SER,~$\overline{P}_{s,M}$}, 
    xtick={-10,-6,...,22},
    grid=major,
    grid style = {dashed,lightgray!75},
      name=bottomaxis
]

\addplot [color=black,solid,line width=1pt,mark options={solid}]coordinates {
         (10,10) (10,12)};\label{Orig_var1_jit12_black}      
\addplot [color=black,dotted,line width=1pt,mark options={solid}]coordinates {
         (10,10) (10,12)};\label{dotted_black}
 \addplot [only marks, color=black, mark=*, mark options={solid,scale=0.8, fill=white}]coordinates {(0.35,25)};\label{MC}

\addplot [only marks, color=gray, mark=*, mark options={solid,scale=0.8, fill=gray}]coordinates {(0.35,25)};\label{2}
\addplot [only marks, color=blue, mark=*, mark options={solid,scale=0.8, fill=blue}]coordinates {(0.35,25)};\label{4}
\addplot [only marks, color=red, mark=*, mark options={solid,scale=0.8, fill=red}]coordinates {(0.35,25)};\label{8}
\addplot [only marks, color=black, mark=*, mark options={solid,scale=0.8, fill=black}]coordinates {(0.35,25)};\label{16}
\addplot [only marks, color=g2, mark=*, mark options={solid,scale=0.8, fill=g2}]coordinates {(0.35,25)};\label{32}

\addplot [only marks, color=black, mark=square*, mark options={solid,scale=0.6}]coordinates {(0.35,25)};\label{Square_apr}


\addplot [color=blue, line width=1pt]coordinates {(-15,1e-3) (30,1e-3)};\label{FEClimitBlue}
   

\addplot [color=my_pink,solid,line width=1pt,mark options={solid}] file {./txtData/SER_PAM/varR01_jit35/OOK_SER_orig.txt};


\addplot [color=my_pink,dotted,line width=1.2pt, mark options={solid}] file{./txtData/SER_PAM/UpperBound/varR01_jit35/OOK_aprox2.txt};

\addplot [only marks, color=my_pink, mark=*, mark options={solid,fill=white,scale=0.8}]file{./txtData/SER_PAM/varR01_jit35/OOK_MC.txt};\label{2PAM}

\addplot [ color=my_pink,solid,line width=1pt,mark options={solid}] file {./txtData/SER_PAM/varR01_jit35/4PAM_SER_orig.txt};


\addplot [color=my_pink,dotted,line width=1.2pt, mark options={solid}] file{./txtData/SER_PAM/UpperBound/varR01_jit35/4PAM_aprox2.txt};
\addplot [only marks, color=my_pink, mark=*, mark options={solid,scale=0.8, fill=white}]file{./txtData/SER_PAM/varR01_jit35/4PAM_SER_MC.txt};\label{4PAM}


\addplot [color=my_pink,solid,line width=1pt,mark options={solid}]file{./txtData/SER_PAM/varR01_jit35/8PAM_SER_orig.txt};


\addplot [color=my_pink,dotted,line width=1.2pt, mark options={solid}] file{./txtData/SER_PAM/UpperBound/varR01_jit35/8PAM_aprox2.txt};

\addplot [only marks, color=my_pink, mark=*, mark options={solid,fill=white,scale=0.8}]file{./txtData/SER_PAM/varR01_jit35/8PAM_SER_MC.txt};\label{8PAM}


\addplot [color=my_pink,solid,line width=1pt,mark options={solid}]file{./txtData/SER_PAM/varR01_jit35/16PAM_SER_orig.txt};

\addplot [color=my_pink,dotted,line width=1.2pt, mark options={solid}] file{./txtData/SER_PAM/UpperBound/varR01_jit35/16PAM_aprox2.txt};

\addplot [only marks, color=my_pink, mark=*, mark options={solid,fill=white,scale=0.8}]file{./txtData/SER_PAM/varR01_jit35/16PAM_SER_MC.txt};\label{16PAM}


\addplot [color=my_pink,solid,line width=1pt,mark options={solid}]file{./txtData/SER_PAM/varR01_jit35/32PAM_SER_orig.txt};

\addplot [color=my_pink,dotted,line width=1pt, mark options={solid}] file{./txtData/SER_PAM/UpperBound/varR01_jit35/32PAM_aprox2.txt};

\addplot [only marks, color=my_pink, mark=*, mark options={solid,fill=white,scale=0.8}]file{./txtData/SER_PAM/varR01_jit35/32PAM_SER_MC.txt};\label{32PAM}

\addplot [color=black, mark=diamond*, mark options={fill=yellow, line width=1pt}, mark size=3pt] coordinates {(6,1.409e-06)};

\addplot [color=black, mark=diamond*, mark options={fill=yellow, line width=1pt}, mark size=3pt]  coordinates {(6,3.8e-3)};

\addplot [color=black, mark=diamond*, mark options={fill=yellow, line width=1pt}, mark size=3pt] coordinates {(6,0.108)};

\addplot [color=black, mark=diamond*, mark options={fill=yellow, line width=1pt}, mark size=3pt]  coordinates {(6,4.01e-1)};

\addplot [color=black, mark=diamond*, mark options={fill=yellow, line width=1pt}, mark size=3pt]  coordinates {(6,6.7e-1)};

\end{semilogyaxis} 
         
\begin{axis}[
        width=0.49\textwidth,
        height=3.5in,   
        xlabel={{Average} Received Optical SNR, $\overline{\text{SNR}}_\text{o}$ [dB]},
        xmin=-5.1,xmax=26.9,
        axis y line=none,
        axis x line*=top,
        xtick={-5.1,-1.1,2.9,...,26.9},
        ] 
         \addplot[transparent] coordinates {(3.2,0)(12.8,1)};
\end{axis}

\node [draw,fill=white,anchor= south west,font=\scriptsize] at (0.05,1.7) {\shortstack[l]{ 
\ref{Orig_var1_jit12_black} \eqref{eq:SER_exact_PAM} \\
\ref{dotted_black}~\eqref{eq:SER_Approx2_UpperBound} \\ 
 \hspace{0.15cm}   \ref{MC} \hspace{0.2cm}  Monte Carlo\\
  \ref{FEClimitBlue} SER threshold}};

  \node [draw,fill=white,anchor= south west,font=\scriptsize] at (0.05,0.05) {\shortstack[l]{ 
  $M=2$ \hspace{0.025cm} $\Delta=0.20$~dB\\
  $M=4$  \hspace{0.025cm} $\Delta=0.19$~dB\\
 $M=8$  \hspace{0.025cm} $\Delta=0.19$~dB\\
  $M=16$ $\Delta=0.19$~dB\\
  $M=32$ $\Delta=0.19$~dB}};

\node [coordinate](input) {};



\node at ($(input.north) + (4.75cm, 0.4cm )$) [scale=1, font=\scriptsize] {$\overline{\text{SNR}}_\text{e}\approx25.3$~dB};

\node at ($(input.north) + (1.4cm, 4.2cm )$) [scale=1, font=\scriptsize] {$\overline{\text{SNR}}_\text{e}\approx24.3$~dB};

\node at ($(input.north) + (6.3cm, 6.05cm )$) [scale=1, font=\scriptsize] {$\overline{\text{SNR}}_\text{e}\approx23.9$~dB};

\node at ($(input.north) + (6.3cm, 6.7cm )$) [scale=1, font=\scriptsize] {$\overline{\text{SNR}}_\text{e}\approx23.7$~dB};

\node at ($(input.north) + (6.3cm, 7.05cm )$) [scale=1, font=\scriptsize] {$\overline{\text{SNR}}_\text{e}\approx23.65$~dB};

\node at ($(input.north) + (1.3, 14.5em)$) [scale=1.2, font=\scriptsize] {$M=2$};
\node at ($(input.north) + (6.65, 14.5em)$) [scale=1.2, font=\scriptsize] {$M=32$};

\draw[stealth-,solid, black,line width=0.7pt,rounded corners] ($(input.north)+(6.05,14.5em)$) -- ($(input.north)+(2.1,14.5em)$);

\draw[-stealth,solid, black,line width=0.7pt,rounded corners] ($(input.north)+(5.3,6.05)$) -- ($(input.north)+(3.8,6.05)$);

\draw[-stealth,solid, black,line width=0.7pt,rounded corners] ($(input.north)+(5.3,6.7)$) -- ($(input.north)+(3.8,6.7)$);
\draw[-stealth,solid, black,line width=0.7pt,rounded corners] ($(input.north)+(5.3,7)$) -- ($(input.north)+(3.8,7)$);

\draw[-stealth,solid, black,line width=0.7pt,rounded corners] ($(input.north)+(2.4,4.2)$) -- ($(input.north)+(3.5,4.3)$);

\end{tikzpicture}
        \vspace{-2ex}
    \caption{Average SER versus transmit power for $\sigma_s=0.35$~m and $\sigma_R^2=0.1$ and different values of $M$ ($M=2,4,8,16,32$).}
    \label{FigSER_VarR01_jit35}
\end{figure}

An additional x-axis with the average received optical SNR ($\overline{\text{SNR}}_\text{o}$) in \eqref{SNR_op_av} has been included on top of Fig. \ref{FigSER_VarR01_jit35}. At $P=6$~dBm, five different points are shown, each indicating the corresponding average received electrical SNR ($\overline{\text{SNR}}_\text{e}$) as defined in \eqref{SNR_el_av}. As we mentioned in Sec.~\ref{Sec:SystemModel}, the signal power $ \mathbb{E}[X^2]$ is modulation format-dependent, and consequently, the electrical SNR also varies with the modulation order (see \eqref{SNR_el_av}). For a given $P$, there is a unique value of $\overline{\text{SNR}}_\text{o}$, while 
$\overline{\text{SNR}}_\text{e}$ varies depending on $M$. Therefore, a common $\overline{\text{SNR}}_\text{e}$ axis cannot be considered. 
Although the electrical SNR is commonly used, its dependence on the modulation format led some studies\textemdash including this one\textemdash to define and use the optical SNR instead. 



\begin{figure}[t]
        \centering
        \pgfplotsset{compat=1.17}
\usetikzlibrary{arrows,decorations.markings}
\usetikzlibrary{calc,arrows}

\begin{tikzpicture} [scale=1]

\definecolor{my_blue}{rgb}{0.87,0.92,0.98}
\definecolor{my_b}{rgb}{0.67,0.75,0.98}
\definecolor{my_b2}{rgb}{0.53,0.64,0.93}
\definecolor{my_g}{rgb}{0.09,0.71,0.14}
\definecolor{fuchsia}{rgb}{0.6,0.4,0.8}
\definecolor{my_pink}{rgb}{1,0.07,0.65}
\definecolor{FEC}{rgb}{0.5, 0.0, 0.13}

\begin{axis}[
    width=0.49\textwidth,
    height=3.1in,   
xmin = 2,
xmax = 9,
ymin = 2.75,
ymax = 5.5,
xlabel = {$m$},
ylabel = {Power Increase, $\Delta P$ [dB]},
ylabel shift=1ex,
grid = major,
grid style = {dashed,lightgray!75},
ytick = {3,3.5,...,5.5},
xtick={2,3,...,9}
]
 \addplot [only marks, color=my_g, mark=*, mark options={solid,scale=0.8}]coordinates {(0.35,1)};\label{Cgreen}
  \addplot [only marks, color=orange, mark=*, mark options={solid,scale=0.8}]coordinates {(0.35,1)};\label{Corange} 
  \addplot [only marks, color=my_pink, mark=*, mark options={solid,scale=0.8}]coordinates {(0.35,1)};\label{Cpink}


\addplot [color=my_pink,solid,line width=1pt,mark options={solid}]
   coordinates {(2,5.0594) (3,3.791) (4,3.3528) (5,3.1777) (6,3.089) (7,3.0485) (8,3.0295) (9,3.0195)}; 

\addplot  [only marks, color=my_pink, mark=*, mark options={solid,fill=my_pink,scale=0.8}]coordinates {(2,5.0594) (3,3.791) (4,3.3528) (5,3.1777) (6,3.089) (7,3.0485) (8,3.0295) (9,3.0195)};

\addplot [color=orange,solid,line width=1pt,mark options={solid}]
   coordinates {
         (2,5.2330) (3,3.8515) (4,3.387) (5,3.1875) (6,3.097) (7,3.054) (8,3.032) (9,3.022)};
         
\addplot [only marks, color=orange, mark=*, mark options={solid,fill=orange,scale=0.8}]coordinates {
         (2,5.2330) (3,3.8515) (4,3.387) (5,3.1875) (6,3.097) (7,3.054) (8,3.032) (9,3.022)};

\addplot [color=my_g,solid,line width=1pt,mark options={solid}]
   coordinates {
         (2,5.359) (3,3.898) (4,3.408) (5,3.196) (6,3.102) (7,3.056) (8,3.033) (9,3.023)};
         
\addplot [only marks, color=my_g, mark=*, mark options={solid,fill=my_g,scale=0.8}]coordinates {
      (2,5.359) (3,3.898) (4,3.408) (5,3.196) (6,3.102) (7,3.056) (8,3.033) (9,3.023)};

  \addplot [ color=blue, line width=1pt]
   coordinates {(1,3) (10,3)};
   \label{limit3dB}

\end{axis}

\node at (1.2,0.8) [scale=1.2, font=\scriptsize] {Corollary~\ref{Corollary1}};

\node [draw,fill=white,anchor= south west,font=\scriptsize] at (5.4,5.1){\shortstack[l]{ @SER$=10^{-5}$}};

\node [draw,fill=white,anchor= south west,font=\scriptsize] at (4,4.9) {\shortstack[l]{ 
\ref{Cgreen} \hspace{0.01cm} $\sigma_s=0.2$~m, $\sigma_R^2=0.9$\\
\ref{Corange} \hspace{0.01cm} $\sigma_s=0.25$~m, $\sigma_R^2=0.5$\\
\ref{Cpink} \hspace{0.01cm}  $\sigma_s=0.35$~m, $\sigma_R^2=0.1$}};

\end{tikzpicture}%
        \vspace{-2ex}
    \caption{Power increment required to go from $M$-PAM to $2M$-PAM based on target SER of $10^{-3}$ for three operating points of \mbox{($\sigma_s$, $\sigma_R^2$).}}
    \label{Fig:SER_GAP}
\end{figure}
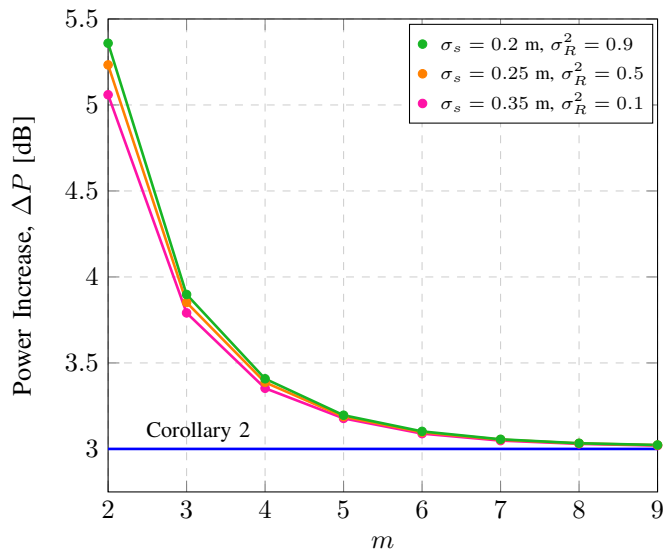

In Fig. \ref{Fig:SER_GAP}, we present the additional transmit power required when the spectral efficiency increases by 1 bit/symbol (i.e, when the modulation order $M$ is doubled) as a function of the number of bits per symbol $m$ for the SER target of $10^{-3}$. 
The power increase when increasing the spectral efficiency by $1$ bit/symbol stabilizes at $3$~dB, as we proved in Corollary~\ref{Corollary1}. From this result it can be concluded that $3$~dB of power increase apply to fading channels.

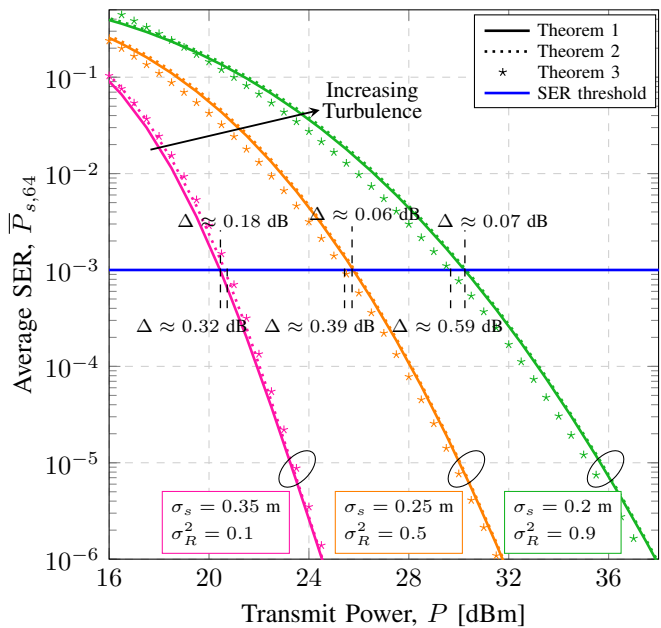
\begin{figure}[t]
        \centering
        \pgfplotsset{compat=1.17}
\usetikzlibrary{arrows,decorations.markings}
\usetikzlibrary{calc,arrows}

\begin{tikzpicture}[scale=1]    

\definecolor{my_pink}{rgb}{1,0.07,0.65}
\definecolor{my_g}{rgb}{0.09,0.71,0.14}
\definecolor{FEC}{rgb}{0.5, 0.0, 0.13}

\definecolor{fuchsia}{rgb}{0.6,0.4,0.8}

\definecolor{bronze}{rgb}{0.8, 0.5, 0.2}
 \definecolor{green2}{rgb}{0.0, 0.8, 0.6}
 \definecolor{my_b2}{rgb}{0.53,0.64,0.93}

\begin{semilogyaxis}[ 
    width=0.49\textwidth,
    height=3.5in,   
    xmin=16, xmax=38,
    ymin=1e-6, ymax=5e-1,  
    xlabel={Transmit Power, $P$~[dBm]}, 
    ylabel shift=-1ex,
    ylabel={Average SER,~$\overline{P}_{s,64}$},
    xtick={16,20,...,38},
    grid=major,
    grid style={dashed,lightgray!75},
      name=bottomaxis
]

\addplot [color=black,solid,line width=1pt,mark options={solid}]coordinates {
         (10,1) (10,2)};\label{Orig_var1_jit12_black}
\addplot [color=black,dotted,line width=1pt,mark options={solid}]coordinates {
         (10,1) (10,2)};\label{dotted_black}
 \addplot [only marks, color=black, mark=star, mark options={solid,scale=0.8, fill=white}]coordinates {(0.35,25)};\label{NoPi}
  \addplot [only marks, color=black, mark=triangle*, mark options={solid,scale=0.8, fill=black}]coordinates {(0.35,25)};\label{NoTerm}


\addplot [color=my_pink,solid,line width=1pt,mark options={solid}] file {./txtData/64PAM_SER/True_varR01_jit35.txt};

\addplot [color=my_pink,dotted,line width=1pt, mark options={solid}] file{./txtData/64PAM_SER/Aprox_varR01_jit35.txt};

\addplot [only marks, color= my_pink, mark=star, mark options={solid,fill=white,scale=0.8}]file{./txtData/64PAM_SER/NoPi_varR01_jit35.txt};


\addplot [color=orange,solid,line width=1pt,mark options={solid}] file {./txtData/64PAM_SER/True_varR05_jit25.txt};

\addplot [color=orange,dotted,line width=1pt, mark options={solid}] file{./txtData/64PAM_SER/Aprox_varR05_jit25.txt};

\addplot [only marks, color= orange, mark=star, mark options={solid,fill=white,scale=0.8}]file{./txtData/64PAM_SER/NoPi_varR05_jit25.txt};

\addplot [color=my_g,solid,line width=1pt,mark options={solid}] file {./txtData/64PAM_SER/True_varR09_jit2.txt};

\addplot [color=my_g,dotted,line width=1pt, mark options={solid}] file{./txtData/64PAM_SER/Aprox_varR09_jit2.txt};

\addplot [only marks, color= my_g, mark=star, mark options={solid,fill=white,scale=0.8}]file{./txtData/64PAM_SER/NoPi_varR09_jit2.txt};


  \addplot [ color=blue, line width=1pt]
   coordinates {(10,1e-3) (40,1e-3)};

\end{semilogyaxis} 


\draw[-,dashed, black,line width=0.5pt] ($(input.north)+(3.13,11em)$) -- ($(input.north)+(3.13,9.5em)$);
\draw[-,dashed, black,line width=0.5pt] ($(input.north)+(3.23,11em)$) -- ($(input.north)+(3.23,9.5em)$);
\node at ($(input.north) + (2.8, 3.1 cm)$) [scale=1, font=\scriptsize] {$\Delta\approx0.39$ dB};

\draw[-,dashed, black,line width=0.5pt] ($(input.north)+(4.54,11em)$) -- ($(input.north)+(4.54,9.5em)$);
\draw[-,dashed, black,line width=0.5pt] ($(input.north)+(4.73,11em)$) -- ($(input.north)+(4.73,9.5em)$);

\node at ($(input.north) + (4.5, 3.1cm)$) [scale=1, font=\scriptsize] {$\Delta\approx0.59$ dB};

\draw[-,dashed, black,line width=0.5pt] ($(input.north)+(1.48,11em)$) -- ($(input.north)+(1.48,9.5em)$);
\draw[-,dashed, black,line width=0.5pt] ($(input.north)+(1.57,11em)$) -- ($(input.north)+(1.57,9.5em)$);
\node at ($(input.north) + (1.1, 3.1cm )$) [scale=1, font=\scriptsize] {$\Delta\approx0.32$ dB};


\draw[-,dashed, black,line width=0.5pt] ($(input.north)+(3.23,11.2em)$) -- ($(input.north)+(3.23,12.6em)$);
\node at ($(input.north) + (3.4, 4.6 cm)$) [scale=1, font=\scriptsize] {$\Delta\approx0.06$ dB};

\draw[-,dashed, black,line width=0.5pt] ($(input.north)+(4.73,11.2em)$) -- ($(input.north)+(4.73,12.45em)$);
\node at ($(input.north) + (5.1, 4.5cm)$) [scale=1, font=\scriptsize] {$\Delta\approx0.07$ dB};

\draw[-,dashed, black,line width=0.5pt] ($(input.north)+(1.48,11.2em)$) -- ($(input.north)+(1.48,12.45em)$);
\node at ($(input.north) + (1.65, 4.5cm )$) [scale=1, font=\scriptsize] {$\Delta\approx0.18$ dB};


\node[ellipse, draw,minimum width = 0.6cm, 
	minimum height = 0.05cm, rotate=45] (e) at (2.5,1.2) {};

\node [draw,color=my_pink, fill=white,anchor= south west,font=\scriptsize,text=black] at (0.7,0.2 em) {\shortstack[l]{ 
$\sigma_s=0.35$ m \\
$\sigma_R^2=0.1$}};
 
\node[ellipse, draw,minimum width = 0.6cm, 
	minimum height = 0.05cm,rotate=45] (e) at (6.6,1.2)  {};
\node [draw,color=my_g, fill=white,anchor= south west,font=\scriptsize,text=black] at (5.25,0.2 em) {\shortstack[l]{ 
$\sigma_s=0.2$ m \\
$\sigma_R^2=0.9$}};

\node[ellipse, draw,minimum width = 0.6cm, 
	minimum height = 0.05cm,rotate=45] (e) at (4.75,1.2) {};
    
\node [draw,color=orange, fill=white,anchor= south west,font=\scriptsize,text=black] at (3,0.2 em) {\shortstack[l]{ 
$\sigma_s=0.25$ m \\
$\sigma_R^2=0.5$}};

\node [draw,fill=white,anchor= south west,font=\scriptsize] at (4.85,6) {\shortstack[l]{ 
\ref{Orig_var1_jit12_black} Theorem \ref{th:SER_exact} \\
\ref{dotted_black} Theorem \ref{Theorem:SER_Aprox} \\
 \hspace{0.12cm}  \ref{NoPi} \hspace{0.24cm} Theorem \ref{Theorem:SER_NoPi} \\
  \ref{FEClimitBlue} SER threshold}};

\draw[stealth-,solid, black,line width=0.7pt,rounded corners]($(input.north)+(2.8,17em)$) -- ($(input.north)+(0.55,15.5em)$);
\node at ($(input.north) + (3.5, 17.7 em)$) [scale=1.2, font=\scriptsize] {Increasing};
\node at ($(input.north) + (3.5, 17 em)$) [scale=1.2, font=\scriptsize] {Turbulence};


\node [coordinate](input) {};


\end{tikzpicture}
        \vspace{-2ex}
    \caption{Average SER versus transmit power considering $64$-PAM, and three different operating points for ($\sigma_s$, $\sigma_R^2$): ($0.2,0.9$), ($0.25,0.5$), and ($0.35,0.1$).}
    \label{FigSER_64PAM}
\end{figure}
 
In Fig.~\ref{FigSER_64PAM}, SERs in \eqref{eq:SER_exact_PAM} (Theorem~\ref{th:SER_exact}), \eqref{eq:SER_Approx2_UpperBound} (Theorem~\ref{Theorem:SER_Aprox}) and \eqref{eq:SER_DenseMPAM_No4Pi} (Theorem~\ref{Theorem:SER_NoPi}) are plotted as a function of the average transmit power $P$ for $64$-PAM. 
Solid and dotted curves correspond to the exact SER~\eqref{eq:SER_exact_PAM} and the approximate SER in~\eqref{eq:SER_Approx2_UpperBound}, resp., while the simpler SER approximation in \eqref{eq:SER_DenseMPAM_No4Pi} is shown with stars. 
For the SER values of interest (i.e., SERs below $10^{-2}$), we can observe that the approximation proposed in \eqref{eq:SER_DenseMPAM_No4Pi} is less accurate compared to \eqref{eq:SER_Approx2_UpperBound}: it has a gap of $\Delta=0.59$~dB to the actual SER, for the third case, while for the second and first cases, the gap $\Delta$ is below $0.39$~dB. However, this expression provides a good trade-off between complexity and accuracy. 

\begin{figure}[t!]
    \centering    
    \begin{minipage}{0.02\columnwidth}
        \centering
        \rotatebox{90}{SER Ratio}
    \end{minipage}%
    \begin{minipage}{1.04\columnwidth}
        \centering
        \begin{subfigure}[b]{1\columnwidth}
            \centering
            \pgfplotsset{compat=1.17}
\usetikzlibrary{arrows,decorations.markings}
\usetikzlibrary{calc,arrows}

\begin{tikzpicture}[scale=1]    

\definecolor{my_pink}{rgb}{1,0.07,0.65}
\definecolor{my_g}{rgb}{0.09,0.71,0.14}
\definecolor{FEC}{rgb}{0.5, 0.0, 0.13}

\definecolor{fuchsia}{rgb}{0.6,0.4,0.8}

\definecolor{bronze}{rgb}{0.8, 0.5, 0.2}
 \definecolor{green2}{rgb}{0.0, 0.8, 0.6}
 \definecolor{my_b2}{rgb}{0.53,0.64,0.93}

\begin{axis}[ 
    width=1\textwidth,
    height=1.6in,   
     xmin=16, xmax=38,
    ymin=0, ymax=2,  
    xlabel={}, 
    ylabel shift=1ex,
     ylabel={},
    xticklabels={}, 
    grid=major,
    grid style={dashed,lightgray!75},
      name=bottomaxis
]

\addplot [only marks, color= black, mark=pentagon*, mark options={solid,fill=white,scale=1}]coordinates {(10,1) (10,2)};\label{Pentagon}

\addplot [only marks, color= black, mark=star, mark options={solid,fill=black,scale=1}]coordinates {
         (10,1) (10,2)};\label{NoPi_das}
\addplot [only marks, color= black, mark=triangle*, mark options={solid,fill=white,scale=1}]coordinates {
         (10,1) (10,2)};\label{NoTerm_dot}


\addplot [color=my_pink,dotted,line width=1pt,mark options={solid}] 
file {./txtData/64PAM_Ratio/AproxOrig_varR01_jit35.txt};

\addplot [only marks, color= my_pink, mark=star, mark options={solid,fill=my_pink,scale=1}]file{./txtData/64PAM_Ratio/Nopi_Orig_varR01_jit35.txt};



\node [draw,fill=white,anchor= north east,font=\scriptsize] at (rel axis cs:0.99,0.34) {\shortstack[l]{ 
\ref{dotted_black} \eqref{eq:SER_Approx2_UpperBound} \\
\hspace{0.25cm}\ref{NoPi_das} \hspace{0.16cm} \eqref{eq:SER_DenseMPAM_No4Pi}}};

\node [draw,fill=white,anchor= north east,font=\scriptsize] at (rel axis cs:0.8,0.25)  {\shortstack[l]{ 
$\sigma_s=0.35$~m, $\sigma_R^2=0.1$}};

\end{axis} 

\end{tikzpicture}
        \end{subfigure}
        
        \vspace{-0.3em} 
        
        \begin{subfigure}[b]{1\columnwidth}
            \centering
            \pgfplotsset{compat=1.17}
\usetikzlibrary{arrows,decorations.markings}
\usetikzlibrary{calc,arrows}

\begin{tikzpicture}[scale=1]    

\definecolor{my_pink}{rgb}{1,0.07,0.65}
\definecolor{my_g}{rgb}{0.09,0.71,0.14}
\definecolor{FEC}{rgb}{0.5, 0.0, 0.13}

\definecolor{fuchsia}{rgb}{0.6,0.4,0.8}

\definecolor{bronze}{rgb}{0.8, 0.5, 0.2}
 \definecolor{green2}{rgb}{0.0, 0.8, 0.6}
 \definecolor{my_b2}{rgb}{0.53,0.64,0.93}

\begin{axis}[ 
    width=1\textwidth,
    height=1.6in,   
     xmin=16, xmax=38,
    ymin=0, ymax=2,  
    xlabel={}, 
    xticklabels={}, 
    ytick={0,0.5,...,2},
    ylabel shift=1ex,
   ylabel={},
    grid=major,
    grid style={dashed,lightgray!75},
      name=bottomaxis
]

\addplot [only marks, color= black, mark=pentagon*, mark options={solid,fill=white,scale=1}]coordinates {(10,1) (10,2)};\label{Pentagon}

\addplot [only marks, color= black, mark=star, mark options={solid,fill=black,scale=1}]coordinates {
         (10,1) (10,2)};\label{NoPi_das}
\addplot [only marks, color= black, mark=triangle*, mark options={solid,fill=white,scale=1}]coordinates {
         (10,1) (10,2)};\label{NoTerm_dot}


\addplot [color=orange,dotted,line width=1pt,mark options={solid}] 
file {./txtData/64PAM_Ratio/AproxOrig_varR05_jit25.txt};

\addplot [only marks, color= orange, mark=star, mark options={solid,fill=my_pink,scale=1}]file{./txtData/64PAM_Ratio/Nopi_Orig_varR05_jit25.txt};




\node [draw,fill=white,anchor= north east,font=\scriptsize] at (rel axis cs:0.99,0.98) {\shortstack[l]{ 
\ref{dotted_black} \eqref{eq:SER_Approx2_UpperBound} \\
\hspace{0.25cm}\ref{NoPi_das} \hspace{0.16cm} \eqref{eq:SER_DenseMPAM_No4Pi}}};

\node [draw,fill=white,anchor= north east,font=\scriptsize] at (rel axis cs:0.8,0.98)  
{\shortstack[l]{ 
$\sigma_s=0.25$~m, $\sigma_R^2=0.5$}};

\end{axis} 

\end{tikzpicture}
        \end{subfigure}
        
        \vspace{-0.3em} 
        
        \begin{subfigure}[b]{1\columnwidth}
            \centering
            \pgfplotsset{compat=1.17}
\usetikzlibrary{arrows,decorations.markings}
\usetikzlibrary{calc,arrows}

\begin{tikzpicture}[scale=1]    

\definecolor{my_pink}{rgb}{1,0.07,0.65}
\definecolor{my_g}{rgb}{0.09,0.71,0.14}
\definecolor{FEC}{rgb}{0.5, 0.0, 0.13}

\definecolor{fuchsia}{rgb}{0.6,0.4,0.8}

\definecolor{bronze}{rgb}{0.8, 0.5, 0.2}
 \definecolor{green2}{rgb}{0.0, 0.8, 0.6}
 \definecolor{my_b2}{rgb}{0.53,0.64,0.93}

\begin{axis}[ 
    width=1\textwidth,
    height=1.6in,   
    xmin=16, xmax=38,
    ymin=0, ymax=2,  
    xlabel={Transmit Power, $P$~[dBm]}, 
    ylabel shift=1ex,
    ylabel={ },
    xtick={16,20,...,38},
    grid=major,
    grid style={dashed,lightgray!75},
      name=bottomaxis
]

\addplot [only marks, color= black, mark=pentagon*, mark options={solid,fill=white,scale=1}]coordinates {(10,1) (10,2)};\label{Pentagon}

\addplot [only marks, color= black, mark=star, mark options={solid,fill=black,scale=1}]coordinates {
         (10,1) (10,2)};\label{NoPi_das}
\addplot [only marks, color= black, mark=triangle*, mark options={solid,fill=white,scale=1}]coordinates {
         (10,1) (10,2)};\label{NoTerm_dot}



\addplot [color=my_g,dotted,line width=1pt,mark options={solid}] 
file {./txtData/64PAM_Ratio/AproxOrig_varR09_jit2.txt};

\addplot [only marks, color= my_g, mark=star, mark options={solid,fill=white,scale=1}] file {./txtData/64PAM_Ratio/Nopi_Orig_varR09_jit2.txt};



\node [draw,fill=white,anchor= north east,font=\scriptsize] at (rel axis cs:0.99,0.98) {\shortstack[l]{ 
\ref{dotted_black} \eqref{eq:SER_Approx2_UpperBound} \\
\hspace{0.25cm}\ref{NoPi_das} \hspace{0.16cm} \eqref{eq:SER_DenseMPAM_No4Pi}}};

\node [draw,fill=white,anchor= north east,font=\scriptsize] at (rel axis cs:0.8,0.98)  
{\shortstack[l]{ 
$\sigma_s=0.2$~m, $\sigma_R^2=0.9$}};

\end{axis} 

\node [coordinate](input) {};


\end{tikzpicture}
        \end{subfigure}
    \end{minipage}
    
    \caption{Ratio between SER approximations in \eqref{eq:SER_Approx2_UpperBound} and \eqref{eq:SER_DenseMPAM_No4Pi} and the exact SER in \eqref{eq:SER_exact_PAM} versus transmitted power considering $64$-PAM, and three different operating points for \mbox{($\sigma_s$, $\sigma_R^2$):} ($0.2,0.9$), ($0.25,0.5$), and ($0.35,0.1$).}
    \label{FigRatio_64PAM}
\end{figure}
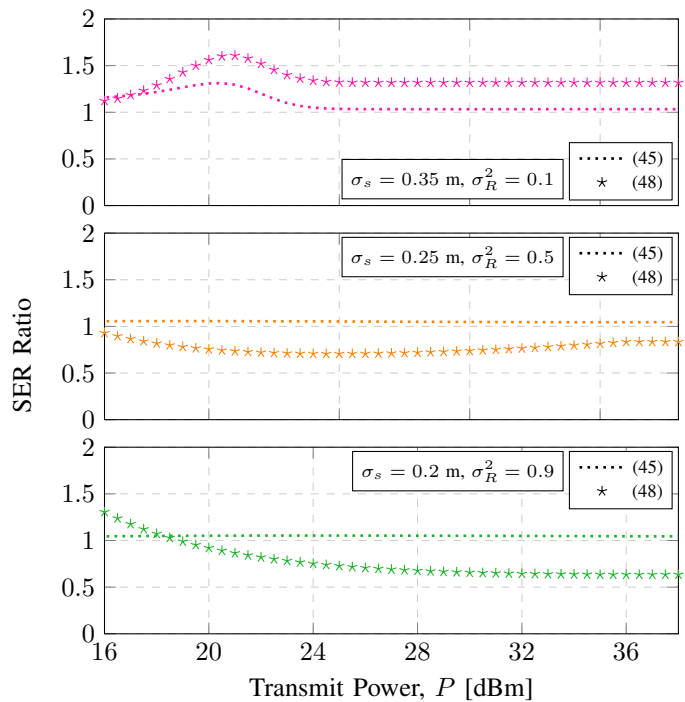

Finally, to examine the accuracy of the different SER approximations in detail, we present in Fig.~\ref{FigRatio_64PAM} the ratio between the two proposed approximations \eqref{eq:SER_Approx2_UpperBound} and \eqref{eq:SER_DenseMPAM_No4Pi}, and the exact expression \eqref{eq:SER_exact_PAM}.
From this figure, we see that in the high-power regime, \eqref{eq:SER_Approx2_UpperBound} is a more accurate approximation than \eqref{eq:SER_DenseMPAM_No4Pi}, as the ratio between \eqref{eq:SER_Approx2_UpperBound} and \eqref{eq:SER_exact_PAM} stabilizes at approximately $1.03$. The ratio between \eqref{eq:SER_DenseMPAM_No4Pi} and \eqref{eq:SER_exact_PAM} is close to the previous one, stabilizing at approximately $1.32$, $0.83$, and $0.63$ for the first, the second and the third cases considered, resp.\footnote{Note that to obtain the pink curves we ran the simulation down to $28.5$~dBm and BER $\approx 10^{-9}$. Reaching $36$~dBm would require simulating up to BER $\approx10^{-15}$, so the results have been extrapolated.}
These results indicate that while the approximation in Theorem~\ref{Theorem:SER_NoPi} does not exactly match the exact SER, it maintains the same asymptotic decay rate. In other words, the approximation is slightly offset from the SER by a constant factor but captures the correct slope.

  

\section{Conclusions}
\label{sec:Conclusions}
In this paper, we developed simple BER and SER approximations for $M$-PAM over an FSO channel that jointly models weak atmospheric turbulence, pointing errors, geometric spread, and atmospheric losses. 
Our proposed expressions provide computationally more efficient solutions to those available in the literature. 
The numerical results presented in this paper show that for the different regimes of turbulence and pointing errors considered in this work, the proposed BER and SER approximations are in close agreement with the exact expressions. 
The proposed approximations were shown to be useful when analyzing the required power increase when dense PAM constellations are used. 
Future work includes an experimental validation of the results presented in this paper as well as extensions to other atmospheric turbulence models such as the Gamma-Gamma distribution.

\bibliographystyle{IEEEtran}
\bibliography{references}


\vspace{11pt}

\begin{IEEEbiographynophoto}{Carmen Álvarez Roa}
was born in Úbeda, Spain, in 1999. She received her B.Sc. and M.Sc. degrees from University of Málaga, both in Telecommunications Engineering, in 2021 and 2023, respectively. Since March 2024, she is a PhD student at the ICT Lab at the Signal Processing Systems (SPS) Group, Eindhoven University of Technology (TU/e), the Netherlands.
Her research interests are free-space optical communications with a particular focus on channel modeling.
\end{IEEEbiographynophoto}

\begin{IEEEbiographynophoto}{Yunus Can Gültekin}
(S’10) was born in Izmir, Turkey. He received the B.Sc. and M.Sc. degrees in electrical and electronics engineering from Middle East Technical University, Ankara, Turkey, in 2013 and 2015, respectively.  He obtained his Ph.D. degree in 2020 from the Eindhoven University of Technology (TU/e), the Netherlands. During 2020-2023, he held a postdoctoral position at TU/e. Since January 2024, he has been a permanent research staff member at TU/e. His research focuses on understanding the requirements of future wireless/optical communications and quantum key distribution systems, and developing coded modulation and signal processing solutions for these systems using tools from information and communication theory. His research has received multiple awards, including Best Paper awards at the 2018 WIC/IEEE Symposium on Information Theory and Signal Processing in the Benelux and at the 2022 Optica Advanced Photonics Congress, a 2023 Quantum Delta NL exchange visit grant, and a 2024 International Excellence Fellowship from the Karlsruhe Institute of Technology.
\end{IEEEbiographynophoto}

\begin{IEEEbiographynophoto}{Kaiquan Wu} was born in Changsha, China, in 1994. He received his B.Sc. and M.Sc. in Communication Engineering from the College of Information Science and Engineering, Hunan University (HNU), China. He was in the state key laboratory of fiber communication during the Master period. He obtained his Ph.D. degree in 2023 from the Eindhoven University of Technology (TU/e), the Netherlands. His main research focuses on designing pragmatic and novel digital signal processing schemes, including error correction code, coded modulation, and channel modeling, for fiber-optic communication systems. 
\end{IEEEbiographynophoto}

\begin{IEEEbiographynophoto}{Cornelis Willem Korevaar} is a part-time university researcher at the Signal Processing Systems (SPS) Group. In addition to his work at the TU/e, he works at the Dutch Applied research institute TNO as a senior telecommunications system engineer in the optical satellite communications program. At TNO, he designs and develops optical and quantum satellite communication systems for ground-to-satellite and satellite-to-satellite communications. His main interests are in end-to-end system performance analysis, modem designs and signal processing. He is contributing to the standardization of space optical communication systems at the CCSDS and ESA. Wim received his PhD, cum laude, for his work on Hermite-based multi-user communication schemes, as an alternative to Fourier-based communication schemes. During his studies he founded two companies, which were acquired in 2007 and 2019, respectively

\end{IEEEbiographynophoto}

\begin{IEEEbiographynophoto}{Alex Alvarado}
 (S’06–M’11–SM’15) was born in Quellón, on the island of Chiloé, Chile. He received the Electronics Engineer’s degree (Ingeniero Civil Electrónico) and the M.Sc. degree (Magíster en Ciencias de la Ingeniería Electrónica) from Universidad Técnica Federico Santa María, Valparaíso, Chile, in 2003 and 2005, respectively, and the Licentiate of Engineering degree
(Teknologie Licentiatexamen) and Ph.D. degree from the Chalmers University of Technology, Gothenburg, Sweden, in 2008 and 2011, respectively.

During 2012–2014, he was a Marie Curie Intra-European Fellow with the University of Cambridge, Cambridge, U.K., and during 2011–2012, he was a Newton International Fellow with the same institution. He is currently a Senior Research Associate at the Optical Networks Group, University College London, London, U.K. His general research interests include digital communications, coding, and information theory. In 2008, he received the Merit Scholarship Program for Foreign Students, granted by the Ministere de l’Éducation, du Loisir et du Sports du Québec. He received the 2009 IEEE Information Theory Workshop Best Poster Award, the 2013 IEEE Communication Theory Workshop Best Poster Award, and the 2015 IEEE Transaction on Communications Exemplary
Reviewer Award
\end{IEEEbiographynophoto}

\vfill

\end{document}